\documentclass[conference]{IEEEtran}
\IEEEoverridecommandlockouts
\usepackage{cite}
\usepackage{amsmath,amssymb,amsfonts,amsthm}
\usepackage{booktabs,multirow,nicefrac}
\usepackage{graphicx}
\usepackage{textcomp}
\usepackage[hyphens]{url}
\usepackage[normalem]{ulem}
\usepackage{fancyhdr}
\usepackage{pifont}
\IfFileExists{bbding.sty}{%
  \usepackage{bbding}%
}{%
  \providecommand{\Checkmark}{\ding{51}}%
  \providecommand{\XSolidBrush}{\ding{55}}%
}
\usepackage{comment}
\usepackage{xspace}
\usepackage{etoolbox}
\usepackage{enumitem,kantlipsum}
\usepackage{makecell}

\usepackage[dvipsnames,svgnames,table]{xcolor}
\usepackage{soul}
\sethlcolor{yellow} 

\usepackage{algorithm}
\usepackage{algorithmic}
\usepackage{tikz}
\usetikzlibrary{calc}
\usepackage[most]{tcolorbox}
\usepackage[skipabove=0pt,skipbelow=0pt,leftmargin=0pt,rightmargin=0pt]{mdframed}
\usepackage[colorinlistoftodos]{todonotes}

\usepackage{subcaption}
\usepackage{hyperref}

\renewcommand{\baselinestretch}{0.92}

\def\BibTeX{{\rm B\kern-.05em{\sc i\kern-.025em b}\kern-.08em
    T\kern-.1667em\lower.7ex\hbox{E}\kern-.125emX}}

\newtheorem{theorem}{Theorem}

\DeclareRobustCommand{\ourmodel}{\textit{FEnc}$^2$\xspace}
\DeclareRobustCommand{\multi}{\textit{Arch-aware Ct Compression}\xspace}
\DeclareRobustCommand{\frag}{\textit{Conv-aware Encoding}\xspace}

\soulregister{\cite}{1}
\soulregister{\ref}{1}
\soulregister{\textit}{1}
\soulregister{\textbf}{1}
\soulregister{\frag}{0}
\soulregister{\multi}{0}
\soulregister{\ourmodel}{0}
\soulregister{\xspace}{0}
\soulregister{\noindent}{0}

\newmdenv[
  backgroundcolor=blue!10,
  linecolor=blue!50,
  linewidth=0.5pt,
  innerleftmargin=5pt,
  innerrightmargin=5pt,
  innertopmargin=5pt,
  innerbottommargin=5pt,
  breakable=true
]{bluebox}

\newmdenv[
  backgroundcolor=green!10,
  linecolor=green!50,
  linewidth=0.5pt,
  innerleftmargin=5pt,
  innerrightmargin=5pt,
  innertopmargin=5pt,
  innerbottommargin=5pt,
  breakable=true
]{greenbox}

\begin{document}
\bstctlcite{IEEEtran:BSTcontrol}

% 页面设置
\pdfpagewidth=8.5in
\pdfpageheight=11in
\pagenumbering{arabic}

\newcommand{\iscasubmissionnumber}{147}
\title{\LARGE{\ourmodel: Unifying Data Packing for Efficient Private Inference via Convolution and Architecture-Aware Fragment Encoding}}
%\vspace{-0.2in}
% \author{\normalsize{ISCA 2026 Submission \textbf{\#\iscasubmissionnumber} -- Confidential Draft -- Do NOT Distribute!!}}

% \author{Zhaoting Gong}
% \email{zgong6@ncsu.edu}
% \affiliation{%
%   \institution{North Carolina State University}
%   \city{Raleigh, NC}
%   \country{USA}
% }

% \author{Ran Ran}
% \email{rran@ncsu.edu}
% \affiliation{%
%   \institution{North Carolina State University}
%   \city{Raleigh, NC}
%   \country{USA}
% }

% \author{Fan Yao}
% \email{fan.yao@ucf.edu}
% \affiliation{%
%   \institution{University of Central Florida}
%   \city{Orlando, FL}
%   \country{USA}
% }

% \author{Wujie Wen}
% \email{wwen2@ncsu.edu}
% \affiliation{%
%   \institution{North Carolina State University}
%   \city{Raleigh, NC}
%   \country{USA}
% }

\author{
\IEEEauthorblockN{
Ran Ran\textsuperscript{1},
Zhaoting Gong\textsuperscript{1},
Nuo Xu\textsuperscript{2},
Yuanchao Xu\textsuperscript{3},
Fan Yao\textsuperscript{4},
and Wujie Wen\textsuperscript{1}
}
\IEEEauthorblockA{
\textsuperscript{1}North Carolina State University \\
\textsuperscript{2}University of Minnesota \\
\textsuperscript{3}University of California, Santa Cruz \\
\textsuperscript{4}University of Central Florida \\
\textsuperscript{1}\{rran, zgong6, wwen2\}@ncsu.edu \\
\textsuperscript{2}xu001536@umn.edu,
\textsuperscript{3}yxu314@ucsc.edu,
\textsuperscript{4}fan.yao@ucf.edu
}
}

\maketitle
\thispagestyle{plain}
\pagestyle{plain}

%%%%%% -- PAPER CONTENT STARTS-- %%%%%%%%

\begin{abstract}
% Homomorphic Encryption (HE) enables privacy-preserving machine learning but suffers from extreme computational and memory overhead. These costs arise not only from slow low-level primitives such as NTTs, rotations, and key-switching, but also from suboptimal ciphertext packing at the application level. Existing layouts preserve either spatial locality or channel locality, but not both, resulting in wasted ciphertext slots, excessive rotations, and inflated ciphertext counts.
Fully Homomorphic Encryption (FHE) enables privacy-preserving machine learning but incurs extreme computational and memory overhead. These costs stem not only from slow low-level primitives such as Number Theoretic Transform (NTT), rotation, and key-switching, but also from inefficient ciphertext packing at the application level. Existing packing strategies typically preserve either neighboring data elements or feature-grouping information, but not both, leading to wasted ciphertext slots, excessive rotations, and inflated ciphertext counts. We propose \ourmodel, a unified and principled fragment-based encoding framework that optimizes slot utilization, rotation complexity, and ciphertext density for CKKS-based private convolutional neural network inference. Rather than applying static or layer-isolated heuristics, \ourmodel introduces (1) \frag, which analytically selects an optimal fragment (block) size to decouple spatial dependencies and jointly minimize inner–outer rotations across layers, and (2) \multi, which dynamically restores ciphertext density after feature- or channel-reduction layers. Together, these transformations reshape encrypted workload structure, reducing homomorphic operations by one to two orders of magnitude. With full memory capacity utilized (i.e., at maximum batch size), \ourmodel achieves end-to-end latency speedups over the state-of-the-art \texttt{Orion} of up to \textbf{228.83$\times$} (GPU) and \textbf{226.06$\times$} (CPU) for LeNet (MNIST), and up to \textbf{4.55$\times$} (GPU) and \textbf{9.43$\times$} (CPU) for MobileNet (ImageNet).
%\ourmodel achieves up to \textbf{$228.83\times$} latency speedup over \texttt{Orion} on GPU and up to \textbf{$226.06\times$} on CPU for LeNet (MNIST). For MobileNet (ImageNet), \ourmodel achieves up to \textbf{$4.55\times$} latency speedup over \texttt{Orion} on GPU and up to \textbf{$9.43\times$} on CPU. 
Importantly, \ourmodel is hardware-agnostic but architecturally transformative: by optimizing encrypted tensor layout before execution, it reduces ciphertext count and workload pressure on hardware, complementing primitive-level optimizations (e.g., NTT/keyswitch accelerators). This demonstrates that application-level data layout is a first-order architectural design dimension for encrypted inference and a critical enabler for next-generation FHE systems.

\end{abstract}

\begin{IEEEkeywords}
% Fully Homomorphic Encryption, CKKS, Data Encoding, Private Machine Learning Acceleration
CKKS, Data Encoding, Fully Homomorphic Encryption, Hardware Acceleration, Private Machine Learning.
\end{IEEEkeywords}

\section{Introduction}

Deep Neural Networks (DNNs) underpin a wide range of modern computer vision tasks, including image classification and object detection~\cite{krizhevsky2010convolutional}. Machine Learning as a Service (MLaaS) platforms (e.g., Amazon SageMaker~\cite{amazonsagemaker}, Google AI Platform~\cite{bisong2019overview}, Azure ML~\cite{team2016azureml}, and OpenAI~\cite{openaiplatform}) enable scalable deployment of such models, but raise significant privacy concerns when sensitive data is processed in untrusted cloud.
Fully Homomorphic Encryption (FHE)~\cite{gentry2009fully, brakerski2014leveled, cheon2017homomorphic, chillotti2020tfhe} enables computation directly over encrypted data, providing a foundation for privacy-preserving machine learning. 

Despite its promise, encrypted convolutional neural network (CNN) inference remains orders of magnitude slower than its plaintext counterpart, even on modern GPUs and HE accelerators. For instance, Orion, the state-of-the-art HE inference system~\cite{ebel2025orion}, takes more than 300 seconds to infer one single encrypted CIFAR-10 image on an Intel Xeon-based server, which represents orders of magnitude slower than plaintext inference.
The bottleneck arises not only from the inherent latency of HE primitives, e.g., rotation, key-switching, and number-theoretic transforms (NTT), but also from the \emph{volume and structure} of these operations, which depend on how activations are packed into ciphertexts across layers (i.e., ciphertext packing). By packing scalar values into the vector slots of a single ciphertext, it enables Single Instruction Multiple Data (SIMD)-style HE operations such as SIMD additions and SIMD multiplications, amortizing the high cost of HE primitives across multiple data elements~\cite{gilad2016cryptonets,brutzkus2019low,dathathri2019chet,kim2022secure,lee2022low}.
%Packing determines SIMD slot utilization and the frequency of data realignment via rotations, exerting an outsized influence on end-to-end performance.

%all intermediate values remain concealed from the cloud. 

%Homomorphic Encryption (HE) addresses this by allowing computations to be performed directly on encrypted data~\cite{gentry2009fully, brakerski2014leveled, cheon2017homomorphic, chillotti2020tfhe}, ensuring that both inputs and intermediate results remain confidential, with only the client possessing the decryption key.

%\textcolor{red}{original} While enticing, the practicality of HE-based private CNN inference remains limited by extreme latency overheads~\cite{gilad2016cryptonets, rivest1978data, jung2021accelerating}. Orion~\cite{ebel2025orion}, the state-of-the-art HE inference system, takes 337.2s to classify a single encrypted CIFAR-10 image using AlexNet on an Intel Xeon Platinum 8581C CPU--several orders of magnitude slower than plaintext inference. The primary bottleneck is HE convolution, which dominates end-to-end latency, accounting for 95\% in AlexNet (no bootstrapping) and 65\% in ResNet-20 (with bootstrapping)\cite{ebel2025orion}, due to ciphertext expansion (by hundreds to thousands of times) and the high cost of HE operations (e.g., ciphertext rotation, multiplication, and addition), which are orders of magnitude slower than their plaintext counterparts\cite{cheon2017homomorphic}.

We argue that the performance of ciphertext packing critically depends on two factors: \textit{(1) the number of ciphertext rotations}, which cyclically shift encrypted vector elements for SIMD-style aggregation, and \textit{(2) slot utilization}, which measures how efficiently ciphertext slots are filled with useful data, reflecting hardware-level SIMD efficiency. HE-CNN inference exacerbates both factors: (i) convolutions introduce nested data dependencies, requiring extensive inner- and outer-rotations for intra- and inter-channel aggregation (Sec.~\ref{sec:data_den}). Each rotation incurs costly key-switching and multiple NTTs, similar to large-scale vector shuffles~\cite{samardzic2022craterlake}, and can account for over 70\% of end-to-end latency at the application level (Fig.~\ref{fig:profiling_res}); (ii) layer-wise channel reduction and expansion reduce slot utilization, leaving many ciphertext slots idle. Existing packing methods partially address these issues, by reducing either inner- or outer-rotations or increasing initial density, but rely on static, handcrafted layouts that degrade across layers~\cite{kim2021hear,ebel2025orion}. As channels shrink or feature maps evolve~\cite{iandola2016squeezenet,szegedy2015googlenet}, sparsely populated ciphertexts lead to proliferation of ciphertexts, poor SIMD efficiency, and high memory overhead. In short, current HE-CNN frameworks~\cite{aharoni2023helayers, ebel2025orion} neither jointly optimize rotation cost and slot utilization nor provide a principled manner to generate efficient layouts across diverse models and datasets.

\begin{figure}[t]
\centering
\includegraphics[width=0.85\columnwidth]{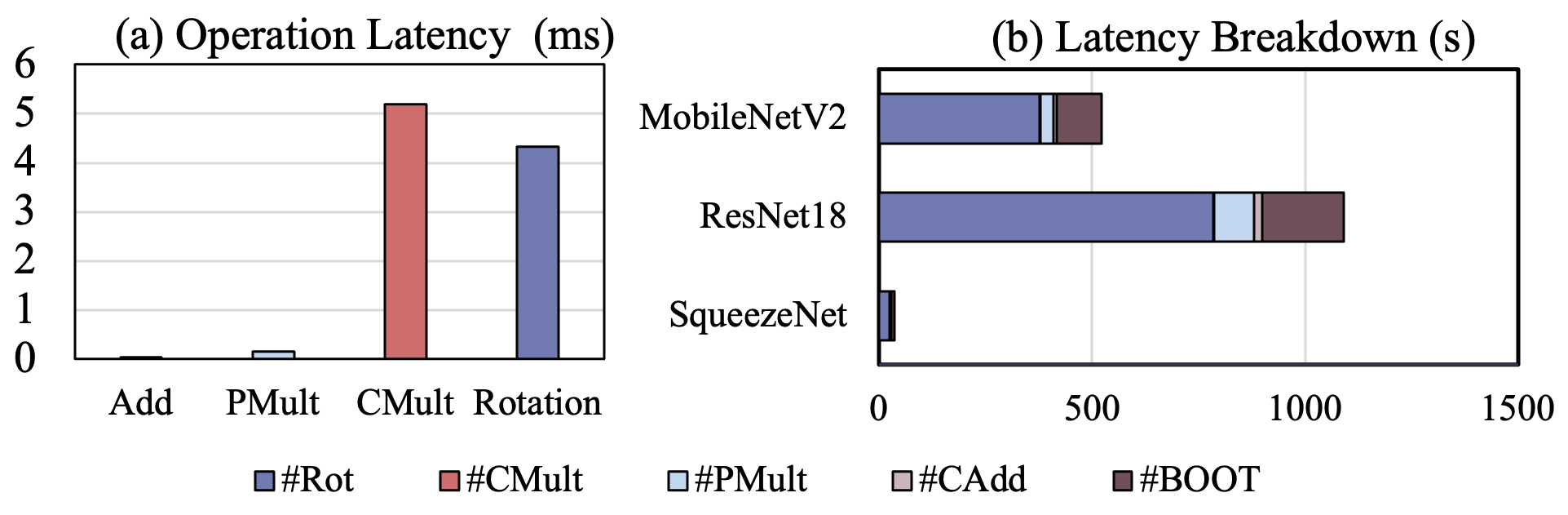}
% \vskip -0.10in
%\vspace{-3pt}
\caption{\small
(a) Latency comparison of HE primitives under encryption parameter $N=2^{16}$ on a GPU platform. 
(b) End-to-end encrypted inference latency breakdown of the SOTA Orion encoding~\cite{ebel2025orion} on ImageNet for three CNNs: SqueezeNet, ResNet18, and MobileNet. 
Note with ciphertext input \& plaintext model, \#CMult is very limited.  
%and appears only in polynomial activations.
}
%\caption{(a) HE primitves' operation latency comparison under encryption parameter $N=2^{16}$ in GPU environment. (b) Latency breakdown profiling results for the most recent SOTA method-Orion~\cite{ebel2025orion} on 3 benchmarks: SqueezeNet, ResNet18, MobileNetV2.
%}
%\textcolor{blue}{Add the motivation example 1. relationship between rotation number-memory usage-latency, 2. Wasted-slot ciphertext computation cost more memory and higher latency 3. higher-level ciphertext has higher latency}
\label{fig:profiling_res}
% \vskip -0.15in
%\vspace{-5pt}
\end{figure}

% We prototype \ourmodel on real systems using CPU- and GPU-based implementations, and evaluate our framework on MNIST, CIFAR10, and ImageNet-scale encrypted inference using LeNet, VGG5, SqueezeNet, ResNet18, and MobileNet. These workloads span the spectrum of representative small networks to deeper and larger HE-CNN models evaluated to date. The results show that across all benchmarks, \ourmodel can effectively reduce the rotation and keyswitch volume compared to state-of-the-art (SOTA) schemes while maintaining high ciphertext occupancy (i.e., CHET~\cite{dathathri2019chet} and Orion~\cite{ebel2025orion}). \ourmodel achieves up to $226.06\times$ and $228.83\times$ speedup, and reduces memory usage by up to 98.49\% and 75.6\% over the latest SOTA, \texttt{Orion}, on CPU- and GPU-based systems, respectively, for LeNet. For MobileNet, \ourmodel achieves up to $9.43\times$ and $4.55\times$ speedup, and reduces memory usage by up to 85.08\% and 75.68\% on CPU- and GPU-based systems, respectively.
%\ourmodel achieves up to $226.06\times$ and $228.83\times$ speedup and introduces up to 98.49\% and 75.6\% memory saving over the latest SOTA--Orion in CPU and GPU based systems, respectively. 

This paper addresses the performance-critical packing problem by introducing \ourmodel, a unified and automated framework that maximizes slot utilization and minimizes costly rotations for efficient and scalable HE-CNN inference across arbitrary CNN models, datasets, and batch configurations. The key insight is to treat ciphertext packing as an algorithmic degree of freedom that can be optimized using convolutional structure and layer-wise tensor geometry. To this end, \ourmodel consists of two complementary components. First, \frag provides a parameterized block decomposition of 4D feature tensors across width, height, channel, and batch dimensions, encoding each block into separate ciphertexts to decouple both adjacent intra-channel and cross-channel data dependencies in convolution. The optimal block size is derived from a convex model that balances inner- and outer-rotation costs, enabling efficient layouts across both small and large batch settings. Second, Architecture-aware Ciphertext Compression maintains high slot utilization across layers by consolidating sparsely filled ciphertexts and preventing fragmentation as the network evolves. Together, these two components allow \ourmodel to dynamically adapt to CNN layer structures and automatically generate ciphertext layouts that reduce rotation complexity while preserving high ciphertext occupancy. By reducing the number and cost of HE operations presented to hardware, \ourmodel amplifies the benefits of existing low-level HE accelerators, including optimized NTT units, key-switching circuits, and SIMD-aware engines~\cite{kim_bts_2022, fan_fastfhe_2025, kim_anaheim_2025, samardzic_f1_2021,yang_ola_2025}, without requiring any modification to model architectures, encryption parameters, or hardware. We prototype \ourmodel on both CPU- and GPU-based systems and evaluate it on encrypted inference workloads spanning MNIST, CIFAR10, and ImageNet-scale settings with LeNet, VGG5, SqueezeNet, ResNet18, and MobileNet. Across all benchmarks, \ourmodel reduces rotation and key-switch volume relative to prior SOTA schemes while maintaining high ciphertext occupancy. Compared with the latest SOTA, \texttt{Orion}, \ourmodel achieves up to $226.06\times$ and $228.83\times$ speedup and up to 98.49\% and 75.6\% memory reduction on CPU- and GPU-based systems, respectively, for LeNet; for MobileNet, it achieves up to $9.43\times$ and $4.55\times$ speedup and up to 85.08\% and 75.68\% memory reduction, respectively. To summarize, our main contributions are:
\begin{itemize}
    \item We emphasize ciphertext packing as a \emph{long-standing bottleneck} in HE-CNNs and formulate a principled, rotation-minimizing fragment layout with theoretical guarantees.
    \item We introduce a \emph{cross-layer ciphertext consolidation} mechanism that preserves high slot utilization. By mitigating slot waste introduced by layer operations (e.g., $1\times1$ convolutions for feature reduction), \multi reduces the number of ciphertexts needed for efficient SIMD processing in subsequent layers. 
    %and prevents ciphertext proliferation throughout the network.
    \item We develop the first \emph{unified and fully automated} HE-CNN packing framework that delivers efficient layouts for any model, dataset, or batch size, without manual tuning or runtime profiling. Its generality is shown by encompassing prior solutions as non-optimal special cases, while its optimality is both analytically and empirically validated.
    \item  We perform comprehensive experiments to evaluate \ourmodel in terms of throughput and memory efficiency. The results validate our key design choices, showing orders of magnitude speedup for HE-CNN inferences and up to $226.06\times$ and $109.96\times$ speedup and introduces up to 98.49\% and 75.6\% memory saving over Orion in CPU and GPU based systems, respectively demonstrating the effectiveness of our approach across platforms.
    %We demonstrate substantial end-to-end improvements on ImageNet-scale encrypted inference, showing that algorithmic workload shaping complements hardware acceleration and materially benefits CPUs, GPUs, and HE accelerators.
\end{itemize}

\color{black}

% \begin{figure}[t]
% \centering
% \includegraphics[width=\columnwidth]{figs/latency-breakdown.pdf}
% % \vskip -0.10in
% \caption{(a) Latency breakdown profiling result for a general 32-64 convolutional layer with 3x3 kernel.
% (b) HE operation latency comparison under encryption parameter $N=2^{15}$ used in our experiment.}
% %\textcolor{blue}{Add the motivation example 1. relationship between rotation number-memory usage-latency, 2. Wasted-slot ciphertext computation cost more memory and higher latency 3. higher-level ciphertext has higher latency}
% \label{fig:profiling_res}
% % \vskip -0.15in
% %\vspace{-20pt}
% \end{figure}

% \begin{figure}[t]
% \centering
% \includegraphics[width=0.9\columnwidth]{figs/motivation-example2.pdf}
% % \vskip -0.10in
% \caption{
% (a) Throughput (img/sec) comparison between CHET, Orion and our method on 3 benchmark models . (b) Amortized memory usage comparison under 3 methods on 3 benchmark models.
% (c) Rotation cost under 3 types of convolution operations with different packing.
% (d) Slot utilization under 3 different ciphertext size with different packing. }
% %\textcolor{red}{REMARK:require more detailed analysis, including more batched data could be processed in limited GPU memory capacity, and latency and memory compare between 3 methods showing the effect of different slot utilization }
% %\textcolor{red}{Remark: throughput~\# samples/min}}
% \label{fig:motivation_example}
% % \vskip -0.15in
% %\vspace{-20pt}
% \end{figure}
%\vspace{-4pt}
\section{Background}
%\vspace{-3pt}
\label{sec:pre}

\begin{table}[t]
\scriptsize
\centering
\caption{Notations table}
\label{tab:notations}
%\small
\footnotesize
\resizebox{\columnwidth}{!}{%
\begin{tabular}{ll}
\toprule
\textbf{Notation} & \textbf{Description} \\ 
\addlinespace[2pt]\midrule
$N$                                & Polynomial degree (number of coefficients) \\
$N/2$                              & Number of available slots in an encoded message \\
$\Delta$                           & Scale factor used for polynomial encoding \\
$Q$                              & Ciphertext modulus chain $\{q_0, q_1, \dots, q_L\}$ \\
%$L$                                & Length of the modulus chain (maximum multiplicative depth) \\
$\alpha$                           & Number of channels packed in one ciphertext \\
$BS$                               & Batch size of input sample\\
$K$                                & Convolution kernel size \\
$H,W$                              & Featuremap Height, Width\\
$N_{in}, N_{out}$                  & Input and Output channel number \\
\bottomrule
\end{tabular}%
}
%\vspace{-10pt}
\end{table}

In this section, we introduce the basics of \textit{Cheon--Kim--Kim--Song (CKKS)}, the encryption scheme used in this work. Table~\ref{tab:notations} summarizes the notation used throughout this paper.

%\textit{Homomorphic Encryption (HE)} enables computation directly on encrypted data without decryption. This work focuses on the state-of-the-art \textit{Cheon--Kim--Kim--Song (CKKS)} scheme~\cite{cheon2017homomorphic}, which is 
\vspace{1mm}
\textbf{CKKS} is a state-of-the-art homomorphic encryption (HE) scheme widely adopted for encrypted neural network inference due to its support for fixed-point real-number arithmetic~\cite{ han2020better, phan2024gpu, khan2023lossless}. A CKKS ciphertext represents a degree-$N$ polynomial in $\mathbb{Z}_q[X]/(X^N + 1)$ that encodes up to $N/2$ complex values, referred to as \textit{slots}--all processed in parallel for SIMD-style computation. CKKS supports several SIMD-based HE primitives essential for encrypted computation, including ciphertext (ct) addition $\textit{Add}(ct_1, ct_2) = ct_1 + ct_2$, ciphertext multiplication $\textit{CMult}(ct_1, ct_2) = ct_1 \circ ct_2$, plaintext--ciphertext multiplication $\textit{PMult}(ct, pt) = ct \circ pt$, slot rotation $\textit{Rot}(ct, k)$, which cyclically shifts encrypted vector elements with the offset $k$, and rescaling $\textit{Rescale}(ct, \Delta) = \nicefrac{ct}{2^\Delta}$, which manages noise after multiplications to prevent decryption failure, by dividing the ciphertext by $2^\Delta$ (or truncating $\Delta$ bits from its modulus), thereby consuming one ciphertext level.

\begin{figure*}[t!]
\centering
\includegraphics[width=1.0\textwidth]{figs/multi-channel-conv-2.pdf}
%\vskip -0.05in
%\vspace{-15pt}
\caption{\small Illustration of HE multi-channel convolution with 4 input/output channels, $3\times3$ kernels, ciphertext inputs, and plaintext kernels.}
\label{fig:multi}
% \vskip -0.1in
%\vspace{-15pt}
\end{figure*}

\vspace{1mm}
\textbf{Rotation.} 
%Rotation, whose latency is close that of most expensive CMult, can be much higher than PMult and Add, e.g. 
Among HE primitives, \textit{Rotation}, together with \textit{CMult}, is substantially more expensive than \textit{PMult} or \textit{Add} (e.g., 4.8ms vs. 0.15ms), as shown in Fig.~\ref{fig:profiling_res} (a). This high cost arises from two components: an automorphism followed by a key-switching operation. The rotation is computed as:
\begin{equation}\label{rotation}
\vspace{-5pt}
\scriptsize
    \mathrm{Rot}(ct, k)
    = (c(X^{ik}), 0)
      + P^{-1}\!\big(a(X^{ik}) \cdot evk_{\text{rot}}^{k}\big),
\end{equation}
where $evk_{\text{rot}}^{k}$ is the rotation evaluation key with large modulus $Q$. For a ciphertext $ct = (c(X^i), a(X^i))$, the automorphism maps each coefficient index $i$ to $ik \bmod N$. The second term performs key switching, which dominates the latency, ensuring the output ct remains decryptable by the same secret key~\cite{samardzic2022craterlake}.

\color{black}

%In the realm of Multi-Party Computation (MPC)~\cite{ghodsi2020cryptonas,jha2021deepreduce,mishra2020delphi,lou2020autoprivacy}, there has been a concerted effort to mitigate the latency impact of non-linear operations, a critical factor in HE-inference, as exemplified by works that focus on reducing ReLU operations. 

% \begin{figure}[t]
% \centering
% \includegraphics[width=0.9\columnwidth]{figs/motivation-example2.pdf}
% % \vskip -0.10in
% \caption{
% (a) Throughput (img/sec) comparison between CHET, Orion and our method on 3 benchmark models . (b) Amortized memory usage comparison under 3 methods on 3 benchmark models.
% (c) Rotation cost under 3 types of convolution operations with different packing.
% (d) Slot utilization under 3 different ciphertext size with different packing. }
% %\textcolor{red}{REMARK:require more detailed analysis, including more batched data could be processed in limited GPU memory capacity, and latency and memory compare between 3 methods showing the effect of different slot utilization }
% %\textcolor{red}{Remark: throughput~\# samples/min}}
% \label{fig:motivation_example}
% % \vskip -0.15in
% %\vspace{-20pt}
% \end{figure}

\section{Motivation}
\label{sec:motivation}

\subsection{HE Multi-Channel Convolution} \label{sec:data_den}
We analyze how excessive rotations arise in multi-channel convolution, the dominant computation pattern in HE-CNN inference. Without loss of generality, we assume HE multi-channel convolution (multi-input, multi-output, MIMO) uses baby-step-giant-step (BSGS)~\cite{kim2021hear,intoci2023slytherin,ebel2025orion}, a common technique in SOTA HE inference~\cite{ebel2025orion,lee2022low} to decouple nested loops and reduce rotation complexity.

Given an input tensor $\mathbf{X} \in \mathbb{R}^{N_{\text{in}} \times H \times W}$, weight kernel $\mathbf{K} \in \mathbb{R}^{N_{\text{out}} \times N_{\text{in}} \times K \times K}$, and output $\mathbf{Y} \in \mathbb{R}^{N_{\text{out}} \times H \times W}$, the standard 2D convolution is:
\begin{equation}
\vspace{-5pt}
% \scriptsize
\footnotesize
\label{eq:multi-channel-conv}
Y_{n_{\text{out}}, h, w} =
\underbrace{\sum_{n_{\text{in}}}}_{\text{channel dependency}}
\;
\underbrace{\sum_{i,j} X_{n_{\text{in}}, h+i, w+j} \cdot K_{n_{\text{out}}, n_{\text{in}}, i, j}}_{\text{spatial dependency}}.
\end{equation}

Eq.~\ref{eq:multi-channel-conv} exhibits two nested dependencies: 
\textbf{spatial (neighboring pixels within each input channel)} and \textbf{channel (aggregation across input channels)}. %In HE convolution, both inherently require ciphertext rotations for alignment
In HE convolution, \textit{these dependencies inherently require ciphertext rotations} to align data for multiplication and accumulation (MAC), as illustrated in \figureautorefname~\ref{fig:multi}.
\begin{itemize}[noitemsep]
    \item \textbf{Inner rotations (spatial aggregation):} Each input ciphertext undergoes $(K^2-1)$ rotations to generate shifted copies for $K \times K$ convolution.
    \item \textbf{Outer rotations (channel aggregation):} After inner rotations, a ciphertext packing $\alpha = \lceil \frac{N}{2HW} \rceil$ channels must be aligned for channel-wise aggregation, requiring $(\alpha - 1)$ rotations per output ciphertext.
    %After inner rotations, packed channels $\alpha=\lceil N/(2HW) \rceil$ must be aligned for output aggregation, requiring $(\alpha-1)$ rotations per output ciphertext.
\end{itemize}

These rotations dominate runtime in large-scale SOTA HE-CNN inference, contributing $\sim70\%$ of total latency (including bootstrapping) for single-image ImageNet inference on MobileNet and ResNet (Fig.~\ref{fig:profiling_res}(b)).

\textbf{Computational perspective:} Optimizing HE convolution requires balancing inner and outer rotations while ensuring high throughput, i.e., \textit{producing ciphertexts that accommodate as many output channels as possible per computation}, and maintaining efficiency across subsequent layers. Inherent channel and spatial dependencies create complex rotation patterns, so naively packing multi-channel feature maps into a single ciphertext for SIMD parallelism is insufficient.

\textbf{Memory perspective:} SIMD throughput directly impacts memory usage. Low slot utilization (e.g., $50\%$) effectively doubles the number of ciphertexts, increasing memory footprint and inflating the computation required in downstream layers. This issue becomes prominent after feature or channel reduction followed by expansion layers (e.g., $1{\times}1$ convolutions) in modern CNNs such as MobileNet and ResNet.

\subsection{Limitations of SOTA Ciphertext Encodings}\label{sec:existing-ciphertext-packing-optim}
\label{Motiv:packing}

%\textbf{Existing HE Packing Methods.} 
Early HE-CNN frameworks, such as LoLa~\cite{brutzkus2019low}, introduce row-major ciphertext packing and store feature maps as 1D vectors to exploit SIMD parallelism. Later methods focus on
improving rotation efficiency or ciphertext utilization: 

\noindent\textbf{Inner-rotation optimization.} CHET~\cite{dathathri2019chet} and HElayers~\cite{aharoni2023helayers} reduce rotations for single-channel convolutions using pre-rotations, blocking, or batch packing. 

\noindent\textbf{Outer-rotation optimization.} Gazelle~\cite{juvekar2018gazelle}, Fast-HEAR~\cite{kim2022secure}, Multiplexed~\cite{lee2022low}, Orion~\cite{ebel2025orion}, and Hyena~\cite{singh2024hyena} extend packing
to multi-channel convolutions using interleaving and BSGS.
However, adjacent pixels often remain in the same ciphertext,
limiting rotation reduction.

\noindent\textbf{SIMD efficiency and dense packing.} Fast-HEAR, Multiplxed and Orion pack more channels to empty slots to improve
the throughput after operations like stride $\geq2$ convolutions and pooling. Fhelipe~\cite{krastev2024tensor} merges sparsely filled slots post-layer but ignores the next layer’s computation pattern, yielding suboptimal packing.

Table~\ref{tbl:compare_packing} provides a high-level comparison between prior methods and our work. 
While some methods achieve dense packing initially, few are able to preserve this density after channel or feature reduction, and none fully optimize rotation overhead. Overall, current HE-CNN packing methods exhibit three key limitations.
%While some achieve dense packing initially, few maintain it after channel or feature reduction, and they all fail to fully optimize rotations. Overall, all  existing HE-CNN packing methods exhibit three key limitations.
%As a result, existing HE-CNN packing methods exhibit three key limitations. 
\ding{202} \textit{Static and heuristic designs}: Packings are manually crafted, layer-agnostic, and lack principled guidance. \ding{203} \textit{Cross-layer fragmentation}: Initially dense packings degrade after channel/feature reduction, yielding poor slot utilization and increased memory overhead. \ding{204} \textit{Incomplete rotation optimization:} Prior works typically reduce either inner or outer rotations, but rarely addresses both jointly across layers.
%\begin{itemize}[noitemsep]
% \item \textbf{Static and heuristic designs:} Packings are manually crafted, layer-agnostic, and lack principled guidance.
% \item \textbf{Cross-layer fragmentation:} Initially dense packings degrade after channel/feature reduction, yielding poor slot utilization and increased memory overhead.
% \item \textbf{Incomplete rotation optimization:} Prior work typically reduces either inner or outer rotations, but rarely addresses both jointly across layers.
% \end{itemize}
Overall, these limitations point to a broader gap: prior methods cannot jointly optimize rotation cost and slot utilization within a unified, automated, principled framework (as achieved by \ourmodel in Table~\ref{tbl:compare_packing}).
%\end{greenbox}

\begin{table}[b!]
% \vspace{-10pt}
\caption{SOTA HE packing methods comparison.}
% \vspace{-10pt}
\label{tbl:compare_packing}
% \vskip 0.15in
\centering
\resizebox{\columnwidth}{!}{
\begin{tabular}{c|ccc|cc}
 & \multicolumn{3}{c|}{Densely-Packed} & \multicolumn{2}{c}{Rot Optimization} \\
\multirow{-2}{*}{Model} & at initial & after channel-reduce & after feature-reduce & multi-channel & single-channel \\ \hline
CryptoNets~\cite{gilad2016cryptonets} & \XSolidBrush & \XSolidBrush & \XSolidBrush & \XSolidBrush & \XSolidBrush \\
CHET~\cite{dathathri2019chet} & \XSolidBrush & \XSolidBrush & \XSolidBrush & \XSolidBrush & \XSolidBrush \\
HELayers~\cite{aharoni2023helayers} & \XSolidBrush & \XSolidBrush & \Checkmark & \XSolidBrush & \Checkmark \\
Multiplxed~\cite{lee2022low} & \Checkmark & \XSolidBrush & \Checkmark & \Checkmark & \XSolidBrush \\
Fhelipe~\cite{krastev2024tensor} & \Checkmark & \XSolidBrush & \Checkmark & \Checkmark & \XSolidBrush \\
%\rowcolor{blue!10}
Hyena~\cite{singh2024hyena} & \Checkmark & \XSolidBrush & \XSolidBrush & \Checkmark& \XSolidBrush \\
Orion~\cite{ebel2025orion} & \Checkmark & \XSolidBrush & \Checkmark & \Checkmark& \XSolidBrush \\
\rowcolor[HTML]{808080}
% \rowcolor{LightGreen}
\textbf{FEnc$^2$} & \Checkmark & \Checkmark & \Checkmark & \Checkmark & \Checkmark \\ \hline
\end{tabular}%
}
% \vspace{-0.20in}
\end{table}

\begin{figure*}[t!]
\centering
\includegraphics[width=0.8\textwidth]{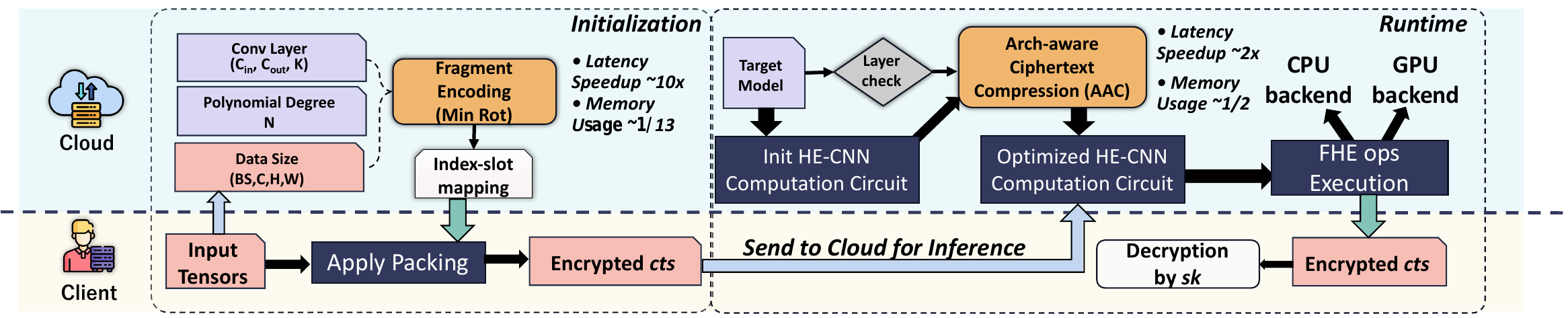}
%\vskip -0.10in
\small
\caption{An overview of \ourmodel, which includes two components: 1) Conv-aware HE fragment encoding for outputting an optimal block size selection; 2) Arch-ware Ciphertext Compression for adapting to channel dimension change.} 
\label{fig:com}
% \vskip -0.1in
%\vspace{-15pt}
\end{figure*}

\subsection{Insights and Positioning of Our Work--\ourmodel}
\ourmodel addresses these gaps by providing a unified, automated, and layer-aware framework for HE-CNN packing. Its parameterized block-size model partitions feature maps across ciphertexts, decoupling adjacent-pixel dependencies within each block to minimize inner rotations, and balances inner- and outer-rotation costs through a principled convex optimization. To maintain high throughput across layers, \ourmodel dynamically applies ciphertext compression and merging, maximizing slot utilization for subsequent layer computations without altering the packing format. This ensures that each layer can process as many output channels per ciphertext as possible, reducing the total number of ciphertexts while preserving SIMD efficiency. Unlike prior static or single-layer heuristics, \ourmodel automatically adapts to layer-wise computation dependencies, producing ciphertext layouts with provable guarantees on both rotation complexity and slot utilization. By delivering efficient, architecture-aware data layouts to the runtime, \ourmodel reduces the overall HE workload exposed to hardware and complements low-level optimizations (e.g., NTT, key-switching), enabling end-to-end HE-CNN acceleration on any hardware platform.

\section{\ourmodel Framework Design}

\subsection{Design Overview}

Building on the aforementioned insights, \ourmodel provides a unified and automated framework that jointly addresses rotation minimization and slot utilization: two criteria that prior packing strategies optimize only in isolation. Rather than optimizing only the initial input packing or performing ad-hoc reorganization after each layer, \ourmodel leverages application-level structure (input layout, convolution patterns, and network topology) to generate packing schemes that remain efficient throughout end-to-end CNN inference.

At a high level, \ourmodel produces data layouts that (i) minimize rotation operations required for multichannel convolution, and (ii) maintain high slot utilization across layers even under feature expansion or reduction. This enables consistent SIMD throughput and controlled ciphertext growth, improving both latency and memory efficiency on HE accelerators.

\ourmodel consists of two complementary components:

\begin{itemize}[noitemsep]
    \item \textbf{\frag} (\S\ref{sec:conv-aware-he-fragment-encoding}): a convolution-aware fragment encoding that theoretically minimizes rotation cost by partitioning and packing features into independent block-wise ciphertexts. Its generality and optimality are analyzed in \S\ref{sec:rot-complexity-cases}. Table~\ref{matrixcompare} also demonstrates its provable advantages over prior packing strategies in rotation complexity.
    \item \textbf{\multi} (\S\ref{sec:channel}): a cross-layer slot-utilization optimizer that densifies sparsely filled ciphertexts via rotation-mask-add and scale-bit--aware ciphertext compression, enabling efficient processing after channel/feature expansion or reduction while preserving the packing format.
    %a cross-layer slot-utilization optimizer that densifies sparsely filled ciphertexts via rotation–mask–add and scale bit allocation for ciphertext compression, enabling efficient handling of channel/feature expansion or reduction without altering the underlying packing format.
\end{itemize}

Together, they form a coherent pipeline that sustains optimal rotation complexity and high slot utilization across the entire network, delivering robust encrypted inference performance. 
%for HE-CNN inference.

% Our unified data packing framework, namely \ourmodel, is a design automation tool for accelerating end-to-end HE-CNN inference  with \textit{awareness of both the computation pattern (e.g., convolution) and the neural network architecture} through principal guarantee of minimal rotation cost and maximum slot utilization both at the initial data packing stage and throughout layer-wise processing to fully leverage limited SIMD parallel computing units and memory capacity on hardware. It consists of two key components: (1) \frag, introduced in \ref{sec:conv-aware-he-fragment-encoding}, which minimizes rotation costs with theoretical guarantees by structurally packing convolution-relevant features into independent ciphertexts in a block-wise manner; we analyze multiple scenarios in \sectionautorefname~\ref{sec:rot-complexity-cases} to demonstrate the generalizability of this approach. Then, (2) \multi, introduced in \ref{sec:channel}, maintains high SIMD efficiency across CNN layers by compressing ciphertexts with empty slots into densely packed ciphertexts in a convolution-friendly layout. Together, \ourmodel sustains high efficiency and throughput throughout the entire computation. 

%Moreover, to mitigate the additional rotations and HE multiplicative level consumption introduced in this process, we develop large-offset ct rotation and ct level-aware quantization techniques, which align the underlying HE optimizations with CNN inference.

\textbf{How \ourmodel is Used in Practice?}
We consider a standard cloud-based inference setting where the client holds \textbf{private input data} and the server hosts \textbf{pretrained plaintext CNN models}. As Fig.~\ref{fig:com} shows, during initialization the client sends only non-sensitive metadata, input dimensions $(H, W, C)$, batch size $BS$, and the model identifier. This information reflects only tensor shapes and reveals no semantic information about the client’s actual input.

Using this metadata and the plaintext model, \ourmodel automatically determines (i) the optimal block size $S$ for \frag and (ii) the model specific layer-wise intermediate ciphertext compression strategy for \multi. %producing an optimized HE-CNN circuit tailored to the model. 
These decisions depend solely on tensor dimensions and convolutional structure. \ourmodel then returns an optimal index-slot mapping to the client.

The client locally performs CKKS encoding and encryption according to this mapping and sends \emph{only ciphertexts} to the server. The client never observes model weights or intermediate activations, and the server never receives plaintext inputs. At runtime, the server executes the optimized HE-CNN circuit entirely over ciphertexts using CPUs, GPUs, or custom HE accelerators, without any data-dependent profiling or adaptation. \textit{Because \ourmodel relies only on public metadata and requires neither architecture modifications no runtime profiling, it provides a drop-in acceleration path for existing MLaaS deployments while preserving the standard CKKS semantic security guarantees.}

\textbf{Why \ourmodel is Hardware-Agnostic Yet Architecturally Transformative?}
\ourmodel operates entirely at the application level, using convolutional structure and model topology to reduce HE operation counts. Because its packing strategies depend only on tensor dimensions and network structure, not on GPU organization, memory hierarchy, or ASIC microarchitecture. \ourmodel is compatible with any CPU-, GPU-, FPGA-, or ASIC-based HE accelerator. 
%At the same time, by substantially reducing ciphertext count, rotations, keyswitches, and NTTs (Table~\ref{tbl:detail-count-compare}), it reshapes the computation and communication workload on hardware, delivering system-level gains without hardware changes.
At the same time, by substantially lowering ciphertext count, rotations, keyswitches, and NTTs (See Table~\ref{tbl:detail-count-compare}), it reshapes the computation and communication demands placed on hardware, yielding system-level benefits without requiring hardware changes.

\begin{figure*}[t!]
\centering
\includegraphics[width=1.02\textwidth]{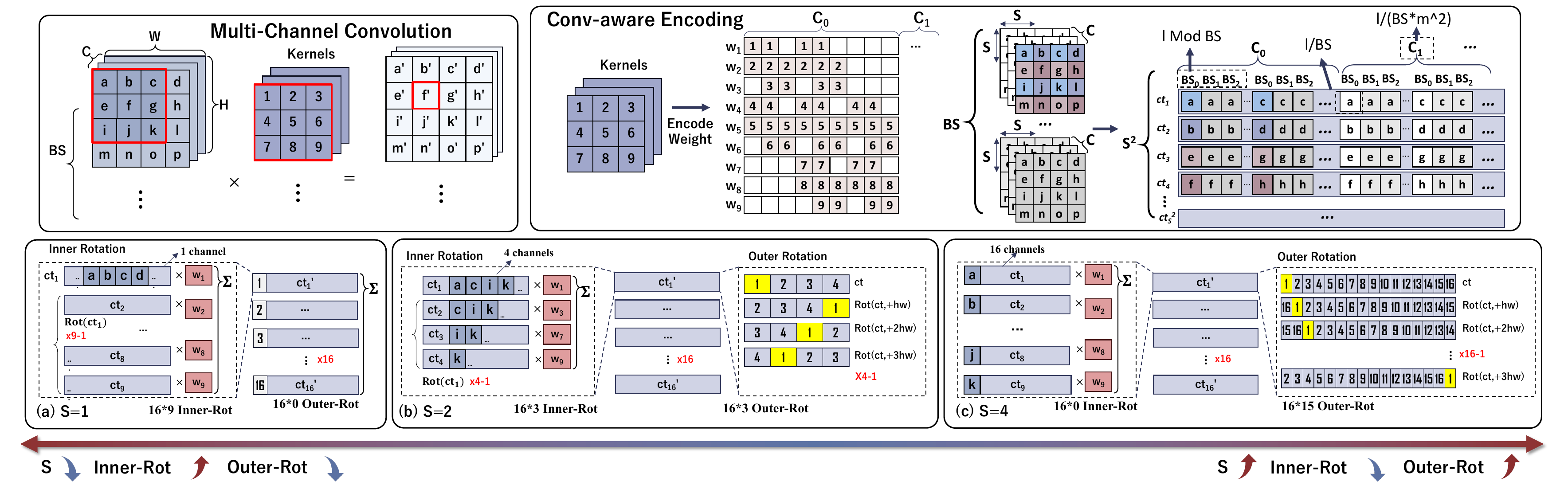}
%\vskip -0.10in
%\vspace{-15pt}
\small
% \caption{Conceptual view of \frag illustrating its generality and optimality using an example input $1{\times}16{\times}4{\times}4$ and convolution $(16,16,3,1)$ ($BS{=}1$). All ciphertexts are densely packed (16 slots). (a) $S{=}1$: non-optimal, zero outer rotations (row-major/CHET, Orion-style\cite{dathathri2019chet,kim2021hear,ebel2025orion}). (b) $S{=}2$: \textbf{ours-optimal, minimal rotation cost by balancing inner and outer rotations (equal inner/outer rotations)}. (c) $S{=}4$: non-optimal, zero inner rotations with max outer-rotations (CryptoNets-style\cite{gilad2016cryptonets}).}
\caption{Conceptual illustration of \frag demonstrating its generality and optimality for input $1{\times}16{\times}4{\times}4$ and convolution $(16,16,3,1)$ ($BS{=}1$). All ciphertexts are fully packed (16 slots). (a) $S{=}1$: non-optimal, no outer rotations (row-major/CHET, Orion-style~\cite{dathathri2019chet,kim2021hear,ebel2025orion}). (b) $S{=}2$: \textbf{optimal (ours)}, minimizing rotation cost by balancing inner and outer rotations. (c) $S{=}4$: non-optimal, no inner rotations and maximal outer rotations (CryptoNets-style~\cite{gilad2016cryptonets}).}
\label{fig:2x2}
%\vskip -0.10in
%\vspace{-10pt}
\end{figure*}

\textbf{Complementing Low-Level Primitive Acceleration.}
Prior HE accelerators typically optimize individual primitives such as NTT, keyswitching, or bootstrapping~\cite{kim_bts_2022 , jung_over_2021, kim_ark_2022, jiao_neo_2025}. These approaches improve per-kernel efficiency but cannot affect the number of primitives dictated by the HE-CNN computation graph. In contrast, \ourmodel reduces the \emph{number} of required rotations, keyswitches, and NTTs across the network. This reduction is complementary to low-level acceleration: it lowers the workload presented to hardware, amplifying the effects of optimized primitives and enabling larger end-to-end gains than primitive-level improvements alone.

Overall, \ourmodel provides an algorithmically driven architectural optimization that coexists naturally with existing accelerators, improving hardware utilization by structurally reducing HE workload at the source.

\begin{algorithm}[htbp]
\caption{\frag}
\scriptsize
\label{alg:fragment-packing}
Given input feature maps $X \in \mathbb{R}^{BS \times C \times H \times W}$ and block size $S$:
\begin{enumerate}[leftmargin=*, itemsep=2pt, parsep=0pt, topsep=2pt]
    \item \textbf{Step 1: Square padding.}
    Compute $M = \max(\mathrm{pad}(H), \mathrm{pad}(W))$ and zero-pad $X$ to shape $(BS, C, M, M)$.

    \item \textbf{Step 2: Block partition.}
    Set $m = M / S$ and partition each $M \times M$ feature map into an $m \times m$ grid of $S \times S$ blocks.

    \item \textbf{Step 3: Fragment packing and encryption.}
    For each intra-block coordinate $(u,v)$ with $0 \le u,v < S$:
    \begin{enumerate}[itemsep=1pt, parsep=0pt, topsep=1pt]
        \item Collect element $(u,v)$ from all $m \times m$ blocks to form  
              $X_{(u,v)} \in \mathbb{R}^{C \times BS \times m^{2}}$.
        % \item Flatten $X_{(u,v)}$ into a 1D vector of size $\frac{N}{2}$ by assigning each element  
        %       $X^{(u,v)}_{ijk}$ to slot $l$ according to
        %       \[
        %           X^{(u,v)}_{ijk} \;\longrightarrow\; \text{slot } l,
        %       \]
        %       where
        %       \[
        %           i = \left\lceil \frac{l}{BS \cdot m^{2}} \right\rceil,\qquad
        %           j = l \bmod BS,\qquad
        %           k = \left\lceil \frac{l}{BS} \right\rceil \bmod m^{2}.
        %       \]
\item Flatten $X_{(u,v)}$ into a 1D vector of size $\frac{N}{2}$ by assigning each element  
      $X^{(u,v)}_{ijk}$ to slot $l$ according to
        \[
        \begin{alignedat}{2}
        & X^{(u,v)}_{ijk} &\;\longrightarrow\;& \text{slot } l,\\
        & \text{where } i &=& \left\lceil \tfrac{l}{BS \cdot m^{2}} \right\rceil,\quad
        j = l \bmod BS,\quad
        k = \left\lceil \tfrac{l}{BS} \right\rceil \bmod m^{2}.
        \end{alignedat}
        \]
    \end{enumerate}

    \item \textbf{Step 4: Return the fully packed ciphertexts $\{ct_{(u,v)}\}_{u,v}$.}
          %Return the fully packed ciphertexts $\{ct_{(u,v)}\}_{u,v}$.
\end{enumerate}
% \vspace{-15pt}
\end{algorithm}

\subsection{\frag}
\label{sec:conv-aware-he-fragment-encoding}

As discussed in Section~\ref{sec:motivation}, an optimal packing strategy must jointly minimize both inner and outer rotations; optimizing only one, as in prior studies, inevitably lead to suboptimal performance. Achieving such a global optimum requires coupling these two rotation costs within a rigorous, tractable formulation-an aspect not addressed in existing methods. \frag addresses this challenge by introducing a 4D subblock-style data layout that deliberately decouples spatial dependencies according to convolutional kernel structure. By effectively leveraging this 4D structure-spanning batch, channel, and the feature map’s height and width, and carefully tuning the block size, it balances inner- and outer-rotation demands, enabling principled and near-optimal rotation complexity for HE convolution.

\subsubsection{Unified Encoding Formulation} 

To achieve a unified encoding formulation that produces fully packed ciphertexts while maintaining flexibility from the initial input stage, we represent the data as a 4D tensor $X \in \mathbb{R}^{BS \times C \times H \times W}$, corresponding to batch size, channels, and feature-map height and width. Algorithm~\ref{alg:fragment-packing} summarizes the packing procedure: 

\textbf{Step 1}: zero-pad the input to a square size $M = \max(\mathrm{pad}(H), \mathrm{pad}(W))$;  
\textbf{Step 2}: partition each feature map into an $m \times m$ grid of $S \times S$ blocks;  
\textbf{Step 3}: for each intra-block position $(u,v)$, gather all elements at that position across all blocks into a tensor $X_{(u,v)}$, flatten it, and assign elements to ciphertext slots according to the mapping in Algorithm~\ref{alg:fragment-packing};  
\textbf{Step 4}: the fully-packed ciphertexts $ct_{(u,v)}$ are generated.

Fig.~\ref{fig:2x2} illustrates the encoding concept. For a $2 \times 2$ block interacting with a $3 \times 3$ kernel, neighboring pixels (e.g., $a, b, e, f$) are assigned to the same slot indices across distinct ciphertexts. Remaining slots are filled first with pixels from other samples (for batching) and then from other channels, if available. By tuning the block size $S$, \frag balances inner- and outer-rotation costs: smaller $S$ reduces outer rotations by packing less channels together, while larger $S$ reduces inner rotations by distributing pixels from the same kernel window across different ciphertexts. This principled tradeoff yields near-optimal rotation complexity for HE convolution (Section~\ref{sec:optimal}).

\textbf{Generality:} \frag provides a unified 4D-aware framework that encompasses prior HE packing schemes as special cases (non-optimal) and generalizes them. \textbf{Notably, $S=1, BS=1$ recovers row-major encoding and its latest variants (e.g., Orion)~\cite{kim2021hear,dathathri2019chet,ebel2025orion}, while $S=M$ reduces to pixel-wise encoding (CryptoNets)~\cite{gilad2016cryptonets}.} Illustrative examples in Fig.~\ref{fig:2x2} further demonstrate the impact of block size: (a) $S=1$ (row-major) yields zero outer rotations, (b) $S=2$ (\textbf{ours-optimal}) achieves minimal rotation cost by balancing inner and outer rotations, and (c) $S=4$ (pixel-wise) eliminates inner rotations but increases outer rotations.

%\textbf{Generality} \frag provides a unified 4D-aware framework that covers prior HE packing schemes as special cases and generalizes them: \textbf{Special cases (non-optimal): $S=1, BS=1$ recovers row-major encoding and its latest variants--Orion~\cite{kim2021hear,dathathri2019chet,ebel2025orion}, while $S=M$ reduces to pixel-wise encoding--CryptoNets~\cite{gilad2016cryptonets}.} As illustrative examples in Fig~\ref{fig:2x2} (a) (b) (c) further shows: $S=1$ (row-major) yields zero outer rotations, $S=2$ (\textbf{ours-optimal}) achieves minimal rotation cost by balancing inner and outer rotations, and $S=4$ (pixel-wise) eliminates inner rotations but increases outer rotations.  

This demonstrates that \frag not only unifies and generalizes prior SOTA methods but also enables systematic, near-optimal HE-CNN inference through block-size tuning and 4D layout exploitation.

\subsubsection{Analytical Model of Rotation Complexity}\label{sec:rot-complexity-cases}
%As illustrated in Figure~\ref{fig:2x2}, the total rotation complexity in our encoding is closely tied to the choice of block size $S$ for a given kernel size $K$. 
%Since $S$ determines the total rotation complexity, To determine the optimal block size $S$, we first analyze the inner rotation complexity for single-channel convolution, followed by the outer rotation complexity for multi-channel convolution, and finally we assess the total rotation complexity.
%In this section, we create a analyatical model
Since the total rotation complexity depends critically on the block size $S$, we next present an analytical model to quantify and optimize the overall rotation complexity.

\noindent\textbf{Inner-rotation complexity.} 
Assuming full utilization of ciphertext slots at the initial packing stage, let the number of ciphertexts be $\textstyle N_\text{in}/\alpha$. We categorize the inner-rotation complexity based on the relationship between block size $S$ and kernel size $K$:

\begin{itemize}[nolistsep,leftmargin=*]
    \item \textbf{CASE 1: $K > S$}. Adjacent sub-blocks overlap (Fig.~\ref{fig:2x2} (a),(b)). In the worst case, each ciphertext requires 
    $\textstyle (\lceil K/S \rceil^2 - 1)$ rotations to compute single-channel convolution results, compared to $O(K^2-1)$ in prior methods.

    \item \textbf{CASE 2: $K \leq S < M$}. The convolution kernel spans at most one block, so inner rotations are generally unnecessary. Only edge computations involve up to $\textstyle 4(S-1)$ rotations.

    \item \textbf{CASE 3: $S = M$}. When the block size equals the feature map size (Fig.~\ref{fig:2x2} (c)), the encoding reduces to pixel-wise encoding~\cite{gilad2016cryptonets}, where each pixel occupies a separate ciphertext. No inner rotations are required.
\end{itemize}

Hence the inner-rotation complexity can be expressed as:
%$Rot_{inner}$
\begin{equation}
\label{equ:inner}
% \footnotesize
\scriptsize
    Rot_{inner}\textit{=} \left\{
    \begin{aligned}
    %& 2 \cdot (\lceil K/S \rceil)^2-1) \cdot K^2, K > S\\
    %& 2 \cdot 4 \cdot (S - 1), K \leq S \\
    & \frac{N_{in}}{\alpha} \times (\lceil K/S \rceil)^2-1), K > S\\
    & \frac{N_{in}}{\alpha}\times \nicefrac{4(S - 1)}{S^2}, K \leq S < M\\
    & 0, S=M \\
    \end{aligned}
    \right.
    % \vspace{-5pt}
\end{equation}
\noindent\textbf{Outer-rotation complexity.} 
For inter-channel convolution, consider a convolutional layer with dimensions $(N_\text{in}, N_\text{out}, K)$. Given an input batch size $BS$ and a chosen block size $S$, each ciphertext packs $\frac{\alpha S^2}{BS}$ channels. Based on Figure~\ref{fig:multi}, the corresponding outer-rotation complexity can be expressed as:
\begin{equation}\label{eq:outer-rot}
\scriptsize
        Rot_{outer}\textit{=} \frac{N_{out}}{\alpha} \times (\frac{\alpha S^2}{BS}-1)
\end{equation}
    %(\frac{N \cdot S^2}{2 \cdot M^2 \cdot BS}-1) \frac{O \cdot 2 \cdot M^2 \cdot BS}{N \cdot S^2}

\noindent\textbf{Total rotation complexity.}  The overall rotation complexity can be expressed by combining inner- and outer-rotation contributions. We consider two primary scenarios based on the block size $S$ and kernel coverage:

% when $K \leq S$:
% \begin{equation}\label{eq:total_rot}
% \scriptsize
%      Rot_{total} = \frac{BS}{\alpha} ( \frac{4(S-1)}{S^2} \cdot N_{in} + (\frac{\alpha S^2}{BS}-1) \cdot N_{out}) \\
% \end{equation}

\begin{equation}
\label{eq:total-rot}
% \footnotesize
\scriptsize
% \vspace{-5pt}
\setlength\abovedisplayskip{0pt}
    Rot_{total}\textit{=} \left\{
    \begin{aligned}
    %& 2 \cdot (\lceil K/S \rceil)^2-1) \cdot K^2, K > S\\
    %& 2 \cdot 4 \cdot (S - 1), K \leq S \\
    & \frac{BS}{\alpha} ( \frac{4(S-1)}{S^2} N_{in} + (\frac{\alpha S^2}{BS}-1) N_{out}), K \leq S < M\\
    & \frac{BS}{\alpha} ( (\frac{K^2}{S^2}-1) N_{in} + (\frac{\alpha S^2}{BS}-1) {N_{out})}, K > S\\
    \end{aligned}
    \right.
    % \vspace{-5pt}
\end{equation}

% \begin{align*}
% %    &\frac{I \cdot 2\cdot M^2 \cdot BS}{N \cdot S^4} 4(S-1) + (\frac{N 
% %    \cdot S^2}{2 \cdot M^2 \cdot BS}-1) \frac{O \cdot 2 \cdot M^2 \cdot BS}{N \cdot S^2} \\
% %&=O+\frac{2\cdot M^2 \cdot BS}{N} (\frac{I \cdot 4(S-1)}{S^4}-\frac{O}{S^2})
%     Rot_{total}\textit{=}&\frac{I \cdot 2\cdot M^2 \cdot BS}{N \cdot S^2} 4(S-1) + (\frac{N 
%     \cdot S^2}{2 \cdot M^2 \cdot BS}-1) \frac{O \cdot 2 \cdot M^2 \cdot BS}{N} \\
%      &=\frac{2\cdot M^2 \cdot BS }{N} (\frac{I \cdot 4(S-1)}{S^2}-O \cdot (\frac{N 
%     \cdot S^2}{2 \cdot M^2 \cdot BS}-1))
% \end{align*}
% When $K > S$, 
% \begin{equation}\label{eq:total_rot_2}
% \scriptsize
% %    &\frac{I \cdot 2\cdot M^2 \cdot BS}{N \cdot S^2} ((\nicefrac{K}{S})^2-1) + (\frac{N 
% %    \cdot S^2}{2 \cdot M^2 \cdot BS}-1) \frac{O \cdot 2 \cdot M^2 \cdot BS}{N \cdot S^2} \\
% %    &=O+\frac{2\cdot M^2 \cdot BS}{N} (\frac{I \cdot K^2}{S^4}-\frac{O}{S^2})
%     % Rot_{total}\textit{=}&\frac{I \cdot 2\cdot M^2 \cdot BS}{N } (\frac{K^2}{S^2}-1) + (\frac{N 
%     % \cdot S^2}{2 \cdot M^2 \cdot BS}-1) \frac{O \cdot 2 \cdot M^2 }{N} \\
%     % &=\frac{2\cdot M^2 \cdot BS}{N} (I \cdot (\frac{\cdot K^2}{S^2}-1)-O \cdot (\frac{N 
%     % \cdot S^2}{2 \cdot M^2 \cdot BS}-1) )
%     %Rot_{outer}\textit{=} (\frac{N \cdot S^2}{2 \cdot M^2 \cdot BS}-1) \frac{O \cdot 2 \cdot M^2 \cdot BS}{N}
%     Rot_{total} = \frac{BS}{\alpha} ( (\frac{K^2}{S^2}-1) N_{in} + (\frac{\alpha S^2}{BS}-1) {N_{out})} \\
% \end{equation}

\noindent\textbf{Special Case: Large Batch.}  
For batch inference with a sufficiently large $BS$, different inputs within the batch can be packed into ciphertexts instead of packing multichannel data from a single input. This completely eliminates the need for outer rotations, leaving only the inner-rotation cost:
\begin{equation}
\label{equ:large}
\scriptsize
    Rot_{\text{amortized}}= \begin{cases}
         \frac{4(S-1)}{S^2} \frac{N_{in}}{\alpha}, & K \leq S < M\\[2mm]
        (\lceil \frac{K}{S} \rceil ^2-1) \frac{N_{in}}{\alpha}, & S < K
    \end{cases}
\end{equation}
As $S$ increases, the amortized rotation complexity (a.k.a rotation per sample) $Rot_{\text{amortized}}$ decreases. We experimentally validate this special case in Section~\ref{sec:detailed-ab-study}.

\subsubsection{Theoretical Foundation for Minimal Rotation Cost} 
\label{sec:optimal}
%Based on above analysis, increasing the block size $S$ reduces the inner-rotation cost, but excessively large $S$ can waste ciphertext slots, e.g. $S^2 > N_{ct}$. 
%\subsubsection{Optimal Block Size with Theoretical Guarantee}
To determine the optimal block size $S$ that minimizes rotation cost, we first constrain $S$ to prevent inefficient ciphertext slot utilization in \frag. Excessively large $S$ with insufficient data (e.g., $S^2 > \frac{BS \cdot N_{in}}{\alpha}$) leaves many slots empty, wasting limited memory and SIMD parallel units. For example, a $4\times4$ single-channel feature map with $S=2$ produces 4 densely packed ciphertexts, each with 4 slots, whereas $S=4$ results in 16 ciphertexts with only one pixel each, causing a 3$\times$ increase in Multiply-Accumulate (MAC) operations and 4$\times$ memory overhead due to only 25\% slot utilization. To prevent this, $S$ is upper-bounded by
\begin{equation}\label{equ:opt}
\scriptsize
S \leq \sqrt{ \frac{BS\cdot N_{in}}{\alpha} }.
\end{equation}

Within this bound, enlarging $S$ reduces the inner-rotation complexity by $\frac{1}{S^2}$ but increases the number of channels per ciphertext to $S^2\alpha$, raising outer-rotation overhead (Equation~\ref{eq:outer-rot}). This trade-off defines the optimal block size $S^*$.

\begin{table}[t]
  \centering
  % 1. 将 caption 移出 tcolorbox 以确保交叉引用和列表生成正常
  \caption{Analytical Complexity Comparison (v.s. SOTAs) %\textcolor{red}{From Fan: only highlight the changes made. I belive only the Hyena+ row?}
  }
  \label{matrixcompare}
  %\vspace{-5pt}
  
  % \begin{tcolorbox}[
  %     enhanced,
  %     colback=green!10, 
  %     colframe=none, 
  %     sharp corners, 
  %     boxrule=0pt,
  %     left=2pt, right=2pt, top=5pt, bottom=5pt, % 调整边距
  %     width=\columnwidth, % 不要用 hbox，因为内部 resizebox 强制要求整栏宽
  %     center,
  %     nobeforeafter
  % ]
    \centering % 确保内容在盒子内居中
    \footnotesize % 建议在 resizebox 外部统一样式
    \resizebox{0.95\linewidth}{!}{ % 使用 0.95\linewidth 给盒子边距留出空间
      \begin{tabular}{llll}
        \toprule
        Metrics & Description & Amortized Complexity \\
        \midrule
        \multirow{6}{*}{Rot}   
          & CHET      & $ O(K^2 \cdot \alpha) $ \\
          & HElayers  & $ O( (K^2+K(M/t_1+M/t_2)) \cdot \alpha/BS ) $ \\
          & Orion     & $ O ( K^2+K\cdot \alpha ) $ \\
          %\rowcolor{blue!10}
          & Hyena+    & $ O ( min( (\alpha/BS)^2 ,1) K^4+ min(\alpha/BS,1) \cdot K^2)) $ \\
          % \rowcolor[HTML]{EFEFEF}
          %\rowcolor{blue!10}
          & Batchwise+ & \textbf{ $ O ( max( \frac{\alpha M^2}{BS}, N_{out}) \lceil \nicefrac{M^2}{BS} \rceil )$ } \\
          & \textbf{Ours-optimal} & \textbf{ $ O ( \sqrt \frac{\alpha \cdot K^2}{BS} ) $ } \\
        \bottomrule
      \end{tabular}
    }
  %\end{tcolorbox}
  \parbox{\textwidth}{\vspace{1ex}\tiny\textit{Table notes}: 
  Hyena+ prioritizes the batch dimension over the input channel dimension ($\alpha$) during packing, whereas \\ HELayers and \ourmodel prioritize the batch dimension over both input pixels and input channels. Batchwise+ is much \\ more complex than Hyena+, since the feature map size $M^2 \gg K^2, K^4$ (e.g., $M=32$, $K=3$).}
  %Hyena+ prioritizes batch dimension over input channel dimension-$\alpha$, while HElayers and \ourmodel, prioritize \\the batch dimension over input pixels and input channels. }}
  %\vspace{-10pt}
\end{table}

\begin{theorem}[Optimal Block Size]
Given a feature map size, encryption parameters, and CNN architecture, for \frag with block size $S$ and fully packed ciphertexts, the total rotation complexity per convolutional layer is minimized when $\frac{K^{2}}{S^{2}} = \frac{\alpha \cdot S^{2} \cdot N_{\text{out}}}{N_{\text{in}}}$, which defines the optimal block size:
\begin{equation}
\label{eq:optimal-S}
\footnotesize
\setlength\belowdisplayskip{1pt}
S^*=\lceil (\frac{K^2 N_{\text{in}}}{\alpha N_{\text{out}}} )^{1/4}  \rceil
    %\frac{K^{2}}{S^{2}} = \frac{N_c \cdot S^{2} \cdot N_{\text{out}}}{N_{\text{in}}},
\end{equation}
%which defines the optimal block size $S^*=\left(\frac{K^2 N_{\text{in}}}{N_c N_{\text{out}}}\right)^{1/4}$.
\end{theorem}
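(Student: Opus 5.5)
We start from the total rotation model in Equation~\ref{eq:total-rot}, in the regime $S<K$ where both inner and outer rotations are active. There we relax the ceiling $\lceil K/S\rceil^2$ to $K^2/S^2$ and treat $S$ as a continuous variable $s>0$. Expanding gives
\[
Rot_{total}(s)=\frac{BS}{\alpha}\Big(\frac{K^2N_{in}}{s^2}-N_{in}\Big)+N_{out}s^2-\frac{BS}{\alpha}N_{out}.
\]
We substitute $t=s^2$, which gives $f(t)=\frac{BS\,K^2N_{in}}{\alpha t}+N_{out}t+\text{const}$. On $t>0$ this is a sum of a strictly convex term $c_1/t$ and a linear term $c_2 t$. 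It is therefore strictly convex, with a unique minimizer obtained from $f'(t)=0$. That condition reads $c_1/t^2=c_2$, i.e.\ the inner contribution equals the outer contribution. This balance is the condition $\frac{K^2}{S^2}=\frac{\alpha S^2N_{out}}{N_{in}}$ in the statement, once we adopt the normalization in which the per-ciphertext channel count $\alpha S^2/BS$ is read with $BS=1$ (or with $BS$ absorbed into $\alpha$, as in the paper's definition of channels per ciphertext). Solving gives $t^*=\big(\frac{K^2N_{in}}{\alpha N_{out}}\big)^{1/2}$, and hence $s^*=\big(\frac{K^2N_{in}}{\alpha N_{out}}\big)^{1/4}$. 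The AM--GM inequality $c_1/t+c_2t\ge 2\sqrt{c_1c_2}$ gives the same result in one line, and we will present that as the clean argument.

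Next we handle the other regime. For $K\le S<M$, Equation~\ref{eq:total-rot} has inner term $\frac{4(S-1)}{S^2}N_{in}$, which decays like $1/S$, while the outer term still grows like $S^2$. So the cost is increasing for $S$ beyond the balance point whenever $s^*\le K$. In that case the minimum over $S\ge K$ is attained at $S=K$, which is dominated by or continuous with the $S<K$ branch at the boundary. We also check that $s^*$ respects the slot-utilization bound in Equation~\ref{equ:opt}; if it does not, we clip $S$ to that bound, and by convexity the clipped value is the constrained optimum.

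The final step is integrality. $S$ must be a positive integer (and, in practice, a divisor of the padded $M$). Since $f(t)$ is convex and unimodal, the integer optimum is one of $\lfloor s^*\rfloor$ or $\lceil s^*\rceil$. We argue for the ceiling by noting that the true inner cost uses $\lceil K/S\rceil^2$, which is a step function. Rounding $S$ up never increases $\lceil K/S\rceil$, so it never increases the inner term. This makes $\lceil s^*\rceil$ the natural choice in Equation~\ref{eq:optimal-S}.

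I expect this rounding step to be the main obstacle. The ceiling/floor comparison is not rigorous without a case analysis on the quadratic growth of the outer term. Reconciling the $BS$ factor with the stated balance condition also needs care. I would first try to bound the gap $f(\lceil s^*\rceil)-f(s^*)$ by a constant factor of $f(s^*)$, using $(s+1)^2\le 4s^2$ for $s\ge1$. That would establish optimality up to a constant, which is consistent with the $O(\sqrt{\alpha K^2/BS})$ entry in Table~\ref{matrixcompare}.
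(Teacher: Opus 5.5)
Your continuous argument is the paper's proof made explicit. The paper only notes that Eq.~\ref{eq:total-rot} is a term decreasing in $S^2$ plus one increasing in $S^2$, and invokes the equality case of Cauchy--Schwarz; that is your AM--GM step on $c_1/t+c_2t$ with $t=S^2$. Your $BS$ remark is correct. Eq.~\ref{eq:total-rot} balances at $\frac{K^2}{S^2}=\frac{\alpha S^2N_{out}}{BS\,N_{in}}$, and the paper's own value $3.57$ for Conv$(128,32,9)$ at $BS=8$, $\alpha=16$ equals $(BS\,K^2N_{in}/(\alpha N_{out}))^{1/4}$ (dropping $BS$ gives $2.12$). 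So the statement must be read with $BS=1$, or with $\alpha$ replaced by $\alpha/BS$. Your regime analysis should also carry the explicit hypothesis $s^*<K$, which the paper leaves implicit. For $K=1$ the $K>S$ branch is empty and $S=1$ (no inner rotations, fewest outer ones) is always optimal, whereas the formula returns $S^*\ge 2$ whenever $N_{in}>\alpha N_{out}$.

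The genuine gap is the rounding step, and it cannot be closed, because $\lceil s^*\rceil$ is not the integer optimum in general. Your justification looks only at the inner term, but rounding up also raises the outer term $N_{out}S^2$ quadratically. For $g(t)=c_1/t+c_2t$ and $a<b$ one has $g(a)-g(b)=(b-a)\big(\frac{c_1}{ab}-c_2\big)$. Hence, in the relaxed model, $\lceil s^*\rceil$ is optimal if and only if $s^{*2}\ge\lfloor s^*\rfloor\lceil s^*\rceil$; otherwise the floor wins. Concretely, take $BS=1$, $\alpha=16$, $K=3$, $N_{in}=128$, $N_{out}=64$. Then $s^*=(9/8)^{1/4}\approx 1.03$, so the formula returns $S^*=2$. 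Yet Eq.~\ref{eq:total-rot} gives $Rot_{total}(1)=124$ versus $Rot_{total}(2)=262$ (or $276$ with the true $\lceil K/S\rceil^2$). Table~\ref{tbl:optimalS} shows the same effect: for Conv$(64,64,3)$ at $BS=4$ the balance point is $\approx 1.22$, but $S=1$ wins with $176$ versus $288$ rotations, exactly as the model predicts. What you can actually prove is threefold:
\begin{itemize}
\item the continuous balance point;
\item the floor/ceiling rule above;
\item a constant-factor guarantee for $\lceil s^*\rceil$: for $s^*\ge 1$, your bound $(s+1)^2\le 4s^2$ gives $c_1/t+c_2t\le 5\sqrt{c_1c_2}$ at $t=\lceil s^*\rceil^2$.
\end{itemize}
The paper's proof stops at ``solving it gives the optimal $S$'' and never justifies the ceiling either.
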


%\begin{proof}
\noindent\textit{\textbf{Proof.}}
The inner-rotation term decreases with $1/S^{2}$, as each ciphertext packs only $1/S^{2}$ of same-channel spatial correlated pixels, while the outer-rotation term increases proportionally with $S^{2}\alpha$ due to inter-channel dependency. Based on Equation~\ref{eq:total-rot}, the total rotation cost is the sum of a term decreasing in $S^{2}$ and a term increasing in $S^{2}$. The minimum occurs when these two terms are equal by the Cauchy--Schwarz inequality, yielding the condition above. Solving it gives the optimal $S$. \textit{This establishes the theoretical optimum of $S$ for minimal rotation complexity of \frag across any CNN model, dataset, and encryption setting-without runtime profiling, and aligns with the empirical results in Table~\ref{tbl:optimalS}.}
%\end{proof}

%\begin{bluebox}
\noindent\textbf{Analytical Complexity Comparison vs. SOTAs.}
% Table~\ref{matrixcompare} compares the amortized rotation complexity (per sample) of our optimal \frag configuration ($S = S^*$) against prior packing schemes-CHET, HElayers, and the recent SOTA Orion and Hyena+-using their published analytical models. For single-image inference ($BS{=}1$), our optimal encoding achieves the \textbf{lowest asymptotic complexity}, growing only linearly with the kernel size $K$, whereas all prior schemes scale with $K^2$. 
% Under batch inference, our complexity further decreases as $BS$ increases, benefiting from more effective amortization across packed samples. Overall, \frag \textbf{theoretically outperforms all existing schemes} in both single-sample and batched settings, with advantages that become even more pronounced at large batch sizes. These analytical findings are corroborated by our measured results in Fig.~\ref{fig:lat-mem-with-batch-new} and Table~\ref{tbl:detail-count-compare}.
Table~{\ref{matrixcompare}} compares the amortized rotation complexity (\textit{per sample}) of our optimal \frag configuration ($S = S^*$) with prior packing schemes-CHET, HElayers, Orion, Batchwise+ and Hyena+, using their published analytical models. For single-image inference ($BS=1$), our optimal encoding achieves the \textbf{lowest asymptotic complexity}, growing only linearly with the kernel size $K$, whereas prior schemes scale at least quadratically ($K^2$), even quartically ($K^4$) for Hyena+. This extra $K^2$ overhead arises because Hyena+ re-rotates the input ciphertext on the fly to sequentially generate $K^2$ output pixels, whereas other packing schemes reuse multiple pre-rotated ciphertext copies stored in memory to enable parallel output-pixel generation. For Batchwise+, each ciphertext packs a single pixel across multiple channels, resulting in a sparse format (the slot number is often much larger than the batch size, e.g. $2^{15} >$ 1024) and increasing the ciphertext count by $\lceil \nicefrac{M^2}{BS} \rceil\times$ compared to other methods. Moreover, its rotation cost is suboptimal due to increased outer rotations from denser channel packing within each ciphertext. Overall, Batchwise+ performs much worse than Hyena+, since $M^2 \gg K^2, K^4$ (e.g., $M=32$, $K=3$). This observation is well consistent with the Hyena paper\mbox{\cite{singh2024hyena}} and our experimental results in Fig.~{\ref{fig:lat-mem-with-batch-new}}.
Under batch inference, the rotation complexity of HELayers, Batchwise+, Hyena+, and our \frag further decreases as $BS$ increases, benefiting from improved amortization across packed samples. 
%\bbnote{Q12-F}
Nonetheless, \frag \textbf{theoretically outperforms all existing schemes} in both single-sample and batched settings, with advantages that become even more pronounced at large batch sizes. 
These analytical findings are corroborated by our measured results in Fig.~{\ref{fig:lat-mem-with-batch-new}} and Table~{\ref{tbl:detail-count-compare}}.

\subsection{Arch-aware Ct Compression}
\label{sec:channel}

While \frag minimizes rotations by tuning $S$, it assumes ciphertexts remain densely packed-an assumption often violated in modern CNNs. Channel-reduction layers (e.g., $1{\times}1$ convolutions in MobileNet~\cite{howard2017mobilenets}, SqueezeNet~\cite{iandola2016squeezenet}, and ResNet~\cite{he2016deep}) shrink intermediate featuremap channels, leaving many slots in HE-packed ciphertexts empty. Because CKKS applies SIMD operations across all slots, these sparse ciphertexts waste computation and require more ciphertexts in later layers. In a typical bottleneck, reducing channels from $N_{in}$ to $N_{DS}$ produces underutilized ciphertexts whenever $N_{DS}<\alpha$, where $\alpha$ is the packing capacity per ciphertext. Passing such ciphertexts into the expansion layer activates only a fraction of the SIMD units, forcing additional ciphertexts and rotations to cover $N_{out}$ output channels. As these low-utilization ciphertexts propagate through subsequent layers, both computation and memory costs grow unnecessarily.

To address this challenge, we introduce \multi (AAC), an architecture-aware ciphertext compression mechanism that restores slot density whenever channel-reduction layers would otherwise create sparsity. AAC reshapes the ciphertext so that each ciphertext entering a convolution contains as many valid channels as possible, maximizing utilization and reducing the number of ciphertexts required downstream. Crucially, AAC operates without altering the packing format and adapts automatically to all intermediate shapes, preserving the rotation bounds established by \frag and sustaining high-throughput HE execution under aggressive channel-reduction patterns.

%To address this challenge, we introduce \multi, an architecture-aware ciphertext compression (AAC) mechanism that restores slot density whenever channel-reduction layers would otherwise create sparsity. AAC repacks intermediate ciphertexts so that each ciphertext carries as many valid channels as possible, thereby maximizing SIMD utilization and reducing the number of ciphertexts required in subsequent layers. Importantly, AAC performs this reshaping without modifying the underlying \frag packing format and adapts automatically to all intermediate tensor shapes. As a result, it preserves the rotation-efficiency guarantees of \frag while sustaining high throughput even under aggressive bottleneck and channel-reduction patterns.

\begin{figure}[t!]
\centering
\includegraphics[width=0.8\columnwidth]{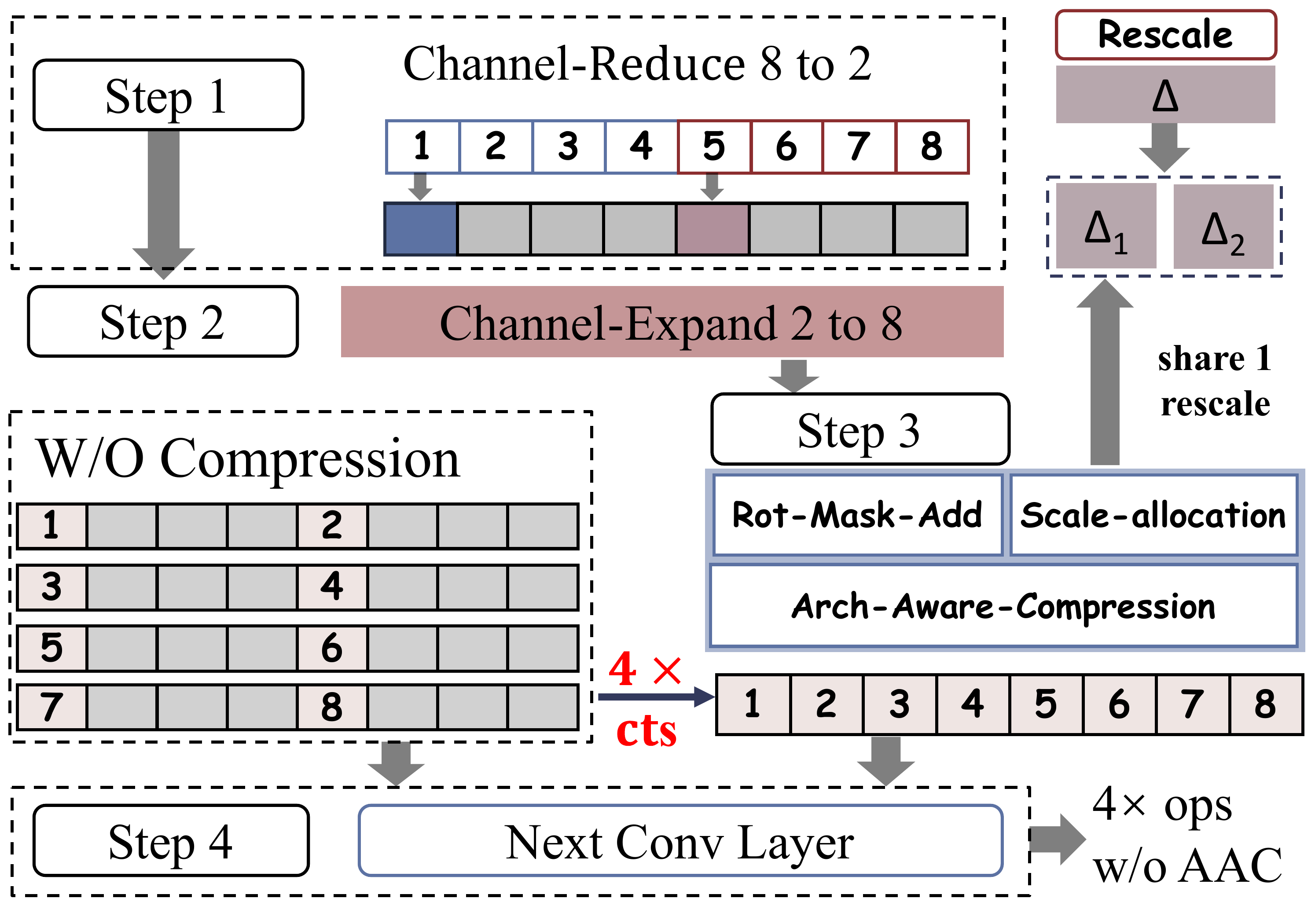}
% \vskip -0.10in
\small
\caption{Architecture-aware ciphertext compression. The example shows a bottleneck block ($8\!\to\!2\!\to\!8$ channels), where AAC consolidates the reduced 2-channel ciphertext into a fully packed form before the expansion layer.}
\label{fig:compression}
%\vskip -0.15in
\end{figure}

AAC performs a lightweight merge that compacts the valid channels into a fully packed ciphertext before the next convolution. In the bottleneck example of Figure~\ref{fig:compression} (step 1 to step 3, $8\!\to\!2\!\to\!8$ channels), the intermediate ciphertext after the $1\times1$ reduction contains only $2$ active channels and thus exhibits low slot utilization. AAC applies a small masked rotate-and-add sequence to consolidate these channels into a dense ciphertext, enabling the subsequent expansion layer to generate all $8$ output channels using only a single ciphertext. Without AAC, the same layer would require $4$ sparse ciphertexts, each with only $25\%$ utilization, incurring roughly
$4\times$ more HE computation in the next layer (Figure~\ref{fig:compression} step 4). While prior systems such as Fhelipe~\cite{krastev2024tensor}, Pantheon~\cite{pantheon2023pantheon}, and Coeus~\cite{ahmad2021coeus} use \texttt{rot-mask-add} patterns for data replication or communication efficiency, AAC repurposes this pattern specifically to preserve slot density across layer-wise dimension changes, maintaining effective SIMD parallelism throughout the HE-CNN pipeline.

Notably, AAC introduces no additional multiplicative depth, even though it applies a plaintext mask. The key observation is that the mask does \emph{not} require a higher plaintext scale than the convolution weights. In standard CKKS practice, each $PMult$ uses a uniform scale $\Delta$ (e.g., 40 bits) to meet precision requirements~\cite{kim2021hear,ran2022cryptogcn,peng2023lingcn}, regardless of the raw bit-width of the model parameters. Consequently, the two consecutive multiplications with the convolution weights and then by the AAC mask can share the same scale. As illustrated in \figureautorefname~\ref{fig:compression} step 3, the weight AAC mask is binary (0/1) and can be encoded at scale $\Delta_1$ and $\Delta_2$ separately so that $\Delta_1\cdot \Delta_2=\Delta$ without inflating precision or noise. Thus, both multiplications can be followed by a \emph{single} rescale applied only after the second multiplication. This avoids the extra multiplicative level that a naive \texttt{rot-mask-add} strategy would incur.

\section{Evaluation Methodology}
%\vspace{-0.1in}
\label{sec:eva-method}
% addressed \textcolor{red}{Wen: Please report the memory utilization, FLOPs in addition to latency, more system level stuff}\\
% \textcolor{blue}{
% Our encoding + other methods - e.g. Group Convolution\\
% Ob1: Level-higher ciphertext has higher latency\\
% Ob2: Rotation is more significant than other operations\\
% }

\begin{table}
% \vspace{-10pt}
\normalsize
\caption{The evaluated models and encryption parameters.}
% \vspace{-10pt}
\label{tbl:setting}
% \vskip 0.15in
\centering
\resizebox{0.9\columnwidth}{!}{
\begin{tabular}{c|cccc|c|c|c}
\hline
\multirow{2}{*}{Model} & \multicolumn{3}{c|}{\# Layers} & \multirow{2}{*}{N} & \multirow{2}{*}{Accuracy ($\%$)} & \multirow{2}{*}{Dataset} & \multirow{2}{*}{Kernel Size} \\
                       & Conv & FC & Act & \multicolumn{1}{|c|}{} &  &  \\ 
        \hline
LeNet & 2    & 2  & 3   & \multicolumn{1}{|c|}{$2^{15}$}  & 98.95 & MNIST    & 5x5\\
VGG5       & 4    & 1  & 4   & \multicolumn{1}{|c|}{$2^{15}$}  & 86.32 & CIFAR-10 & 3x3\\
SqueezeNet & 10   & 10 & 10  & \multicolumn{1}{|c|}{$2^{16}$}  & 81.5  & CIFAR-10 & 3x3 \& 1x1\\l
ResNet18   & 17   & 1  & 17  & \multicolumn{1}{|c|}{$2^{16}$}  & 66.8  & ImageNet & 3x3 \& 1x1 \\
MobileNet & 55  & 1 & 55  & \multicolumn{1}{|c|}{$2^{16}$}  & 72.0  & ImageNet & 3x3 \& 1x1\\
\hline
% \multirow{2}{*}{Model} & \multicolumn{4}{c|}{Encryption Parameters} & Mult  & Security  \\
%         & N & Q & P & q & Level &   Level   \\ 
% \hline
% LeNet  & 8192  &  218  & 26 & 36 & 7  & 128 bits\\
% VGG5       & 16384 &  438  & 29 & 32 & 14 & 128 bits\\
% ResNet18   & 65536 &  1501 & 46 & 51 & 58 & 153 bits \\
% \hline
\end{tabular}
}
\vspace{-0.1in}
\end{table}

\begin{figure*}[t]
%\begin{tcolorbox}[colback=blue!10, colframe=blue!30, sharp corners]
    \centering
    \includegraphics[width=.9\linewidth]{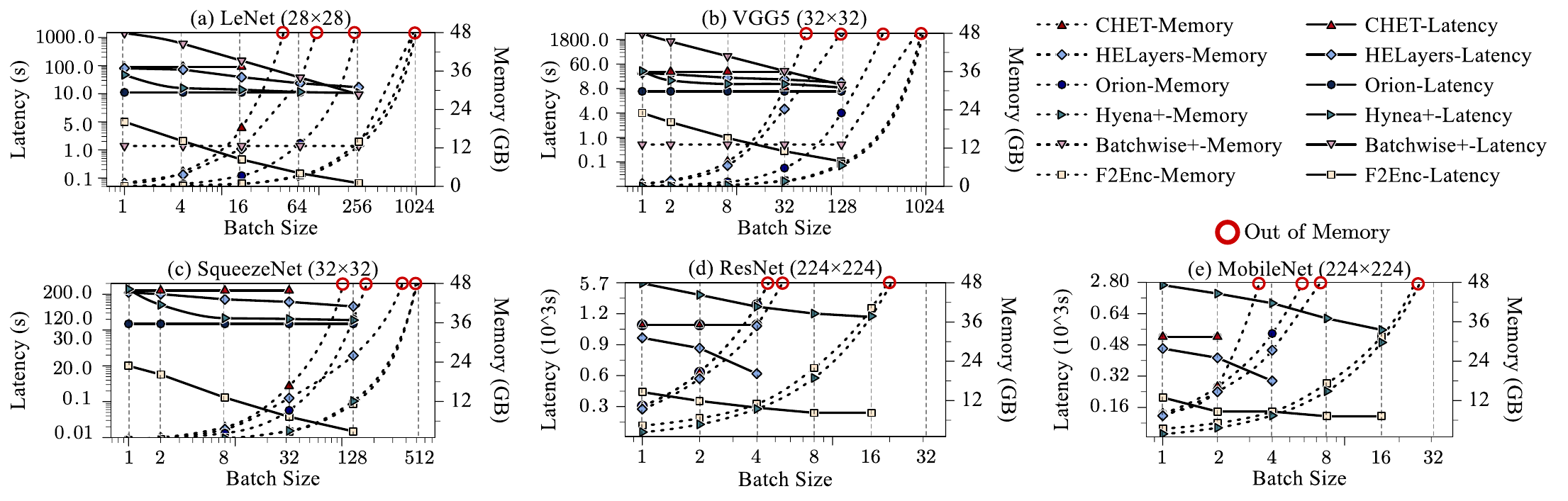}
    %\vspace{-5pt}
   %\caption{End-to-end latency and memory comparison of \ourmodel v.s. prior SOTAs across varying input sizes, batch sizes, and model scales on GPU.}
   \caption{Latency/memory comparison between \ourmodel and SOTAs across various input/batch sizes, and model scales on GPU. Note that Batchwise+'s results (in blue color) are available for LeNet (a) and VGG5 (b), but not reported for SqueezeNet (c), ResNet (d), and MobileNet (e) since it exhausts GPU memory under all batch size settings.}
    \label{fig:lat-mem-with-batch-new}
    %\vspace{-5pt}
%\end{tcolorbox}
\vspace{-15pt}
\end{figure*}
% \vspace{-20pt}

% \textbf{Configuration.} 
\noindent\textbf{Environment:} We conduct our experiments on a machine equipped with an AMD Threadripper 3975WX@3.5 GHz CPU, 256GB 8-Channel RAM running at 3200MT/s, and an RTX A6000 GPU with 48GB RAM. 
% \yx{CPU DRAM should has the number of channels and frequency to estimate the bandwidth it can provide}

% \noindent\textbf{Implementation:}
% Our system integrates a set of architecture-aware optimizations into an end-to-end encrypted CNN inference pipeline. During initialization, a \textit{conv-aware fragment encoding} module determines the optimal data layout based on the input and model structure, improving ciphertext packing efficiency and reducing rotation overhead. At runtime, \textit{arch-aware ciphertext-compression} optimizes the computation circuit by aligning with model topology and reducing ciphertext redundancy, complemented by leveraging weight-quantized model to offset the additional level consumption. Together, these components form a unified workflow that accelerates encrypted CNN inference. 

% \vspace{1mm}
\noindent\textbf{Implementation:}
We evaluate \ourmodel on the GPU backend using Liberate-FHE~\cite{Liberate_FHE}, an HE framework optimized for GPU execution. For deeper CNNs that require ciphertext refresh, we adopt the GPU-optimized bootstrapping from NEXUS~\cite{zhang2024secure}, where each bootstrapping consumes 14 ciphertext levels.

ReLU layers are replaced with the standard polynomial approximation $ax^{2} + bx + c$~\cite{kim2021hear,pmlr-v202-ran23b}. For fully connected layers, we use diagonal-matrix multiplication with BSGS optimization following prior work~\cite{ebel2025orion}. Notably, our \frag eliminates the need for the post-processing required in earlier schemes~\cite{kim2022secure} when handling stride $\ge2$ convolutions or average pooling: stride is implemented simply by discarding the unused cts. This avoids introducing vacant slots and reduces unnecessary HE computation. Moreover, if a subsequent convolution layer requires a different optimal block size $S$, we adapt the pre- or post-processing (\textit{rot–mask–add}) operations to adjust block size, following techniques similar to HEAR~\cite{kim2022secure}. This adjustment ensures that each layer is executed under its optimal rotation setting.

\noindent\textbf{Baselines:} We compare our method against four representative encrypted inference baselines:
\textbf{CHET}~\cite{dathathri2019chet}, \textbf{HELayers}~\cite{aharoni2023helayers}, 
and the recently proposed \textbf{Batchiwse+}~\cite{singh2024hyena}, \textbf{Hyena+}~\cite{singh2024hyena} and \textbf{Orion}~\cite{ebel2025orion}. Specifically, CHET introduces Toeplitz-based HE convolutions to exploit SIMD parallelism. 
HELayers, Batchwise+ and Hyena+ reduce rotation cost through block tiling and multi-image packing.
Orion further improves slot utilization and rotation efficiency using 
multi-channel packing with BSGS optimizations. Notably, Batchwise+ is also a variant when $S=M$ with our \frag.

\noindent\textbf{Models and Datasets:} We follow prior work~\cite{ebel2025orion,lee2022low,dathathri2019chet,aharoni2023helayers} and evaluate four representative models with their standard datasets: LeNet~\cite{dathathri2019chet,lecun2002gradient} on MNIST~\cite{lecun2010mnist}, VGG5~\cite{simonyan2014very} and SqueezeNet~\cite{iandola2016squeezenet} on CIFAR10~\cite{krizhevsky2010convolutional}, and ResNet18~\cite{lee2022low} and MobileNet~\cite{howard2017mobilenets} on ImageNet~\cite{deng2009imagenet}. Table~\ref{tbl:setting} summarizes the model structures and accuracies. Due to their higher computational depth, SqueezeNet, ResNet18, and MobileNet require bootstrapping, whereas LeNet and VGG5 do not require this step. To ensure a fair comparison, we adopt the bootstrapping placement of Orion~\cite{ebel2025orion} and the implementation of~\cite{zhang2024secure}; Our end-to-end timing measurements fully incorporate all bootstrapping overhead. 
% Finally, \textit{\ourmodel is evaluated on ImageNet-scale encrypted inference (ResNet/MobileNet), representing among the largest end-to-end HE workloads demonstrated to date and enabling a realistic assessment of encrypted deep learning at scale.}
Finally, we evaluate \textit{\ourmodel on ImageNet-scale encrypted inference (ResNet and MobileNet), representing one of the largest end-to-end homomorphic encryption workloads demonstrated to date and enabling a realistic assessment of secure deep learning at scale.}

\noindent\textbf{Encryption Parameters.} Table~\ref{tbl:setting} provides the encryption parameters $N$ used for evaluation each model adopted in RNS-CKKS. In our experiments, we use a fixed scale factor $\Delta = 2^{40}$ (40 bits) for ciphertext encoding to maintain numerical precision, and select the appropriate ciphertext modulus $Q$ to guarantee a security level $\lambda \geq 128$ bits for all evaluated models, sufficient to withstand the known attacks in~\cite{albrecht2015concrete}.

\section{Evaluation}
\label{Eval}

%\subsection{End-to-End Results}
% \textcolor{red}{Wen: You need to re-write the whole thing, the number is even not consistent with the table 3, there are so many redundant and unprofessional description in the evaluation.}

% Table~\ref{tbl:eval-sota} presents the results of our method compared to previous SOTA methods. We use mainly two metrics: the number of processed image per second (img/sec) and memory footprint per image (mem/img) to compare end-to-end performance. 
%To ensure a fair comparison, we maximum batch sizes that SOTA methods can support without memory swapping, as these methods require substantially more memory for intermediate results.
%The maximum batch sizes that SOTA methods can support on LeNet, VGG5, ResNet18, and SqueezeNet are 256, 128, 8, and 128, respectively. Results with the minimum batch sizes are also included.
%As shown in Figure~\ref{tbl:eval-sota}, our method outperforms the selected baselines in both img/sec and mem/img (GB) across all benchmarks. 
%Specifically, it speeds up img/sec by $1846.33\times$, $12.25\times$, and $8.13\times$, and reduce mem/img by $\times$, $\times$, and $\times$, compared to CHET, HELayers, and Orion, respectively. The performance improvement is most significant for rotation reduction, better slot utilization and smaller encryption parameter. 

\begin{table*}[t]
%\tiny
\centering
\caption{Breakdown of HE operations. Percentages for \ourmodel indicate reductions relative to the baseline \emph{HElayers}.}
\label{tbl:detail-count-compare}
\resizebox{0.85\textwidth}{!}{
\begin{tabular}{c|c|c|c|c|c|c|c|c}
\hline
Model &
Batch &
HE-CNN Schemes &
Rot &
\begin{tabular}{@{}c@{}}Key-\\switching\end{tabular} &
\begin{tabular}{@{}c@{}}NTT \\ \& INTT \end{tabular} & 
\begin{tabular}{@{}c@{}}ct\\count\end{tabular} &
Mult &
\begin{tabular}{@{}c@{}}Slot Util.\\(\%)\end{tabular} \\ 
\hline
\multirow{8}{*}{SqueezeNet}  & \multirow{4}{*}{batch=1}   & HElayers & 1065 & 1225 & 664K  & 16  & 331K  & 12.5  \\ 
                             &                            & Orion    & 674  & 754  & 287K  & 8   & 143K  & 25    \\ 
                             &                            & \ourmodel-F   & 477  & 517  & 176K  & 4   & 877K   & 50    \\
& & \cellcolor[gray]{0.9} \textbf{\ourmodel}  
& \cellcolor[gray]{0.9} 224 ($\downarrow$79\%) 
& \cellcolor[gray]{0.9} 244 ($\downarrow$80\%) 
& \cellcolor[gray]{0.9} 72K ($\downarrow$89\%) 
& \cellcolor[gray]{0.9} 2 ($\downarrow$88\%)  
& \cellcolor[gray]{0.9} 36K ($\downarrow$89\%) 
& \cellcolor[gray]{0.9} 100 \\
                             \cline{2-9} 
                             & \multirow{4}{*}{batch=Max} & HElayers & 834  & 994  & 664K  & 16  & 331K  & 12.5  \\ %\cline{3-9} 
                             &                            & Orion    & 674  & 754  & 287K  & 8   & 143K  & 25    \\ %\cline{3-9} 
                             &                            & \ourmodel-F   & 92   & 132  & 175K  & 4   & 877K   & 50    \\ %\cline{3-9} 
& & \cellcolor[gray]{0.9} \textbf{\ourmodel}     
& \cellcolor[gray]{0.9} 54 ($\downarrow$94\%)   
& \cellcolor[gray]{0.9} 74 ($\downarrow$93\%)   
& \cellcolor[gray]{0.9} 71K ($\downarrow$89\%)  
& \cellcolor[gray]{0.9} 2 ($\downarrow$88\%)    
& \cellcolor[gray]{0.9} 35K ($\downarrow$89\%)  
& \cellcolor[gray]{0.9} 100 \\
                             \hline
\multirow{8}{*}{ResNet18}    & \multirow{4}{*}{batch=1}   & HElayers & 2398 & 2758 & 1409K & 36  & 703K  & 31.25 \\ %\cline{3-9} 
                             &                            & Orion    & 1894 & 2134 & 878K  & 24  & 438K  & 50    \\ %\cline{3-9} 
                             &                            & \ourmodel-F   & 1040 & 1160 & 439K  & 12  & 218K  & 75    \\ %\cline{3-9} 
& &  \cellcolor[gray]{0.9} \textbf{\ourmodel}
&  \cellcolor[gray]{0.9} 788 ($\downarrow$67\%)
&  \cellcolor[gray]{0.9} 868 ($\downarrow$69\%)
&  \cellcolor[gray]{0.9} 159K ($\downarrow$89\%)
&  \cellcolor[gray]{0.9} 8 ($\downarrow$78\%)
&  \cellcolor[gray]{0.9} 79K ($\downarrow$89\%)
&  \cellcolor[gray]{0.9} 100 \\
                             \cline{2-9} 
                             & \multirow{4}{*}{batch=Max} & HElayers & 2046 & 2406 & 1408K & 36  & 703K  & 31.25 \\ %\cline{3-9} 
                             &                            & Orion    & 1894 & 2134 & 878K  & 24  & 438K  & 50    \\ %\cline{3-9} 
                             &                            & \ourmodel-F   & 502  & 622  & 439K  & 12  & 2189K  & 75    \\ %\cline{3-9} 
& &  \cellcolor[gray]{0.9} \textbf{\ourmodel}
&  \cellcolor[gray]{0.9} 242 ($\downarrow$88\%)
&  \cellcolor[gray]{0.9} 322 ($\downarrow$87\%)
&  \cellcolor[gray]{0.9} 158K ($\downarrow$89\%)
&  \cellcolor[gray]{0.9} 8 ($\downarrow$78\%)
&  \cellcolor[gray]{0.9} 78K ($\downarrow$89\%)
&  \cellcolor[gray]{0.9} 100 \\ 
                             \hline
\multirow{8}{*}{MobileNet} & \multirow{4}{*}{batch=1}   & HElayers & 7044 & 9604 & 8161K & 256 & 4076K & 7.25  h\\ %\cline{3-9} 
                             &                            & Orion    & 5391 & 6671 & 4082K & 128 & 2038K & 12.5  \\ %\cline{3-9} 
                             &                            & \ourmodel-F   & 2696 & 3336 & 2041K & 64  & 1019K & 25    \\ %\cline{3-9} 
& &  \cellcolor[gray]{0.9} \textbf{\ourmodel}
&  \cellcolor[gray]{0.9} 786 ($\downarrow$89\%)
&  \cellcolor[gray]{0.9} 946 ($\downarrow$90\%)
&  \cellcolor[gray]{0.9} 510K ($\downarrow$94\%)
&  \cellcolor[gray]{0.9} 16 ($\downarrow$94\%)
&  \cellcolor[gray]{0.9} 255K ($\downarrow$94\%)
&  \cellcolor[gray]{0.9} 100 \\
                             \cline{2-9} 
                             & \multirow{4}{*}{batch=Max} & HElayers & 5892 & 8452 & 8160K & 256 & 4075K & 7.25  \\ %\cline{3-9} 
                             &                            & Orion    & 5391 & 6671 & 4082K & 128 & 2038K & 12.5  \\ %\cline{3-9} 
                             &                            & \ourmodel-F   & 1894 & 2534 & 2040K & 64  & 1019K & 25    \\ %\cline{3-9} 
& &  \cellcolor[gray]{0.9} \textbf{\ourmodel}
&  \cellcolor[gray]{0.9} 472 ($\downarrow$92\%)
&  \cellcolor[gray]{0.9} 632 ($\downarrow$93\%)
&  \cellcolor[gray]{0.9} 510K ($\downarrow$94\%)
&  \cellcolor[gray]{0.9} 16 ($\downarrow$94\%)
&  \cellcolor[gray]{0.9} 255K ($\downarrow$94\%)
&  \cellcolor[gray]{0.9} 100 \\
                             \hline
\end{tabular}
}
\vspace{-15pt}
\end{table*}

\subsection{End-to-End Performance Evaluation} 
\noindent\textbf{Performance evaluation on GPU.} We first evaluate the end-to-end performance of \ourmodel on a GPU platform (Section~\ref{sec:eva-method}).
Figure~\ref{fig:lat-mem-with-batch-new} reports the amortized latency (sec/image) and the total memory footprint under different batch sizes across various CNN models. To isolate the contribution of each component, we define \textbf{\ourmodel-F} as the variant using only fragment encoding, while \textbf{\ourmodel} applies both proposed techniques jointly.

Overall, \ourmodel consistently outperforms all prior encrypted-inference systems across the evaluated CNN architectures. Averaged across all benchmarks, \ourmodel achieves \textbf{$748.63\times$}, \textbf{$213.02\times$}, %\mynew{\textbf{$633.08\times$}}, 
and \textbf{$109.96\times$} latency speedups on average over \texttt{CHET}, \texttt{HELayers}  %\mynew{\texttt{Hyena+}}, 
and \texttt{Orion}, respectively. % \textcolor{red}{minor question: since we removed the bar graph with exact number annotations, the figure now does not have exact numbers on them. Would the readers concern about how to read the figures to get the data we present here?} 
In addition, \ourmodel significantly extends the maximum batch size that can be accommodated under the same GPU memory capacity, up to $512$, $512$, $256$, $16$, and $16$ for LeNet, VGG5, SqueezeNet, ResNet18, and MobileNet, respectively. 

%Note Hyena+ achieves the similar memory efficiency as \ourmodel, \mynew{This is close to Hyena+, whose high memory efficiency stems from re-rotating input ciphertexts on the fly to generate each output pixel sequentially, allowing operation within very small on-chip caches instead of relying on large memory to store multiple pre-rotated copies for parallel pixel generation.}

%This is comparable with that of Hyena+, whose high memory efficiency is guaranteed via re-rotating an input ciphertext on the fly to produce each output pixel sequentially, to fit into very small on-chip caches instead of using large memory to store multiple pre-rotated copies for generating output pixel in parallel.  

%instead of using large accessing pre-roated ciphertext copies in memory, to 

%\mynew{Hyena+ achieves comparable memory efficiency as \ourmodel with respect to batch size . 
%This is because
%Hyena re-rotate the input ciphertexts to avoid storing the inner-rotated-copies, while \ourmodel principally reduce the inner-rotation complexity to achieve less inner-rotated copies.
%} %\bbnote{Q1-F}
%\mynew{Except for Hyena+, it also  it is designed for low memory footprint with tradeoff the computation cost.}        

For small CNNs such as LeNet and VGG5, our framework delivers the largest performance gains. These networks operate on relatively small feature maps (32$\times$32), allowing \ourmodel to pack substantially larger batches into a single ciphertext and thus exploit much higher SIMD parallelism than settings limited to small batch counts (e.g., 1, 4, 16). Consequently, these scenarios highlight the performance divergence between \ourmodel and schemes like \texttt{Hyena+} and \texttt{Batchwise+}, as illustrated in Fig.~\ref{fig:lat-mem-with-batch-new}.
%These settings also make the differences relative to \texttt{Hyena+} and \texttt{Batchwise+} particularly clear in Fig.~\ref{fig:lat-mem-with-batch-new}. 
As a result, for \texttt{LeNet}, it achieves the highest speedup, realizing up to \textbf{$1846\times$}, \textbf{$1625.5\times$}, and $228.8\times$ acceleration (0.0493s) compared to 91.13s, 80.14s, %\mynew{10.73s}, 
and 11.19s reported by \texttt{CHET}, \texttt{HELayers}, %\mynew{Hyena+}, 
and \texttt{Orion}, respectively; it is also \textbf{$264.32\times$} faster than \texttt{Hyena+} (10.73s at batch size 256) and up to \textbf{$231.7\times$} faster than \texttt{Batchwise+} across the evaluated batch sizes. VGG5 shows similar trends: at batch 128, \ourmodel improves throughput by \textbf{$626.25\times$}, \textbf{$171.25\times$}, %\mynew{\textbf{$264.32\times$}}, 
and \textbf{$86.92\times$} over the three baselines. The mechanisms behind these gaps differ. \texttt{Hyena+} attains strong memory efficiency by re-rotating inputs on the fly, but its sequential output generation and neighbor-preserving packing still incur inefficient inner rotations, leading to much higher latency, consistent with the $O(K^4)$ vs. $O(K)$ gap in Table~\ref{matrixcompare}. \texttt{Batchwise+}, in contrast, removes inner rotations but decomposes dense ciphertexts into pixel-wise ciphertexts, which inflates outer rotations, multiplications/additions, and the ciphertext count by $\lceil \nicefrac{M^2}{BS} \rceil\times$.

For the larger SqueezeNet model, although the input feature-map resolution remains 32$\times$32, the memory footprint per ciphertext increases due to larger encryption parameters (see \tableautorefname~\ref{tbl:setting}). This limits the maximum feasible batch size from $1024$ to $512$, thereby constraining the attainable batch-level parallelism. Moreover, the prevalence of $1{\times}1$ convolutions substantially reduces inner-rotation overhead; hence, the gains here primarily stem from eliminating ciphertext redundancy via AAC rather than rotation reduction. Even in this low-rotation regime, \ourmodel still yields substantial improvements, achieving \textbf{$215.15\times$} speedup over \texttt{CHET} and \textbf{$59.10\times$} over \texttt{Orion}. This demonstrates that our optimizations generalize beyond rotation-heavy workloads and remain effective for architectures dominated by efficient $1{\times}1$ convolutions. As Table~\ref{tbl:detail-count-compare} shows, with AAC applied, ciphertext slot utilization increases from $50\%$ to nearly $100\%$, confirming the efficiency of our packing strategy. This also explains why \texttt{Batchwise+} can be evaluated only for the smaller LeNet/VGG5 settings in Fig.~\ref{fig:lat-mem-with-batch-new}: once deeper HE circuits require a larger polynomial degree ($N = 2^{16}$), the per-ciphertext memory and computation overhead rises sharply, and its $\lceil \nicefrac{M^2}{BS} \rceil\times$ ciphertext amplification quickly becomes prohibitive.

For high-resolution ImageNet on larger models like ResNet18 and MobileNet, prior HE inference systems are fundamentally constrained by ciphertext capacity: a single-channel feature map can nearly saturate the available HE slots. In this case, Orion's multi-channel optimization becomes inapplicable, causing its data layout to degenerate into the same format as \texttt{CHET}.
%Orion's  optimization no longer exists, translated to the same format as CHET. 
By decoupling spatial and channel dependencies, \ourmodel retains efficient multi-channel packing while reducing rotation and multiplication overhead. On ResNet18, this design yields \textbf{$8.93\times$}, \textbf{$3.47\times$}, and \textbf{$8.93\times$} improvement over three baselines (\texttt{CHET}, \texttt{HELayers}, and \texttt{Orion}) at batch size 4; even relative to the more memory-efficient \texttt{Hyena+}, \ourmodel remains \textbf{$4.85\times$} faster at batch size 16 (240s vs. 1168.45s). For MobileNet, \ourmodel not only accelerates computation but also supports up to \textbf{4$\times$ larger} batch sizes before reaching GPU memory limits, highlighting its scalability advantage. For
\texttt{HELayers}, the method partitions large feature maps into smaller blocks and incorporates batching to fill ciphertext slots, which can reduce overhead and improve latency as batch size increases. However, its limited ability to reuse wasted slots causes bottleneck convolution layers to constrain slot utilization to 31.25\% and 7.25\% in ResNet and MobileNet, respectively. Therefore, its overall throughput remains limited.  
%\mynew{For Hyena, its design does not reduce the inner-rotation complexity because neighboring pixels are still packed within the same ciphertext. Without increasing the batch size, its inference latency is even higher than that of HE Layers.}
\textbf{\ourmodel consistently achieves higher throughput as batch size increases, because conv-aware encoding and AAC dramatically reduce rotation cost and memory usage. This enables far larger parallelism and pushes the OOM boundary well beyond prior encrypted-inference systems.} Besides, for ResNet18 and MobileNet, bootstrapping is required during inference, introducing an additional $191\,\text{s}$ and $105\,\text{s}$ latency, respectively. Although this unavoidable overhead increases the total runtime, our method substantially reduces rotation cost across convolutional layers, enabling significant overall acceleration even under bootstrapping.

%Besides, for ResNet18 and MobileNet, bootstrapping is required for inference. it consume 191s and 105s, respectively. Although it added extra latency overhead, our method reduces the rotation overhead in extensive convolutional layers. 

Fig.~\ref{fig:lat-mem-with-batch-new} highlights a clear scalability gap among different schemes: \ourmodel consistently supports substantially larger batch sizes before encountering GPU memory limits. Across all networks, throughput-oriented baselines such as \texttt{CHET}, \texttt{HELayers}, and \texttt{Orion} exhaust memory at relatively small batch sizes, whereas \ourmodel continues to scale until the hardware limit is reached. For example, LeNet and VGG5 run out of memory at batch sizes of approximately 64–128 and 128 under Orion, while \ourmodel sustains batches of 256 and 512 respectively. SqueezeNet shows a similar trend, where baselines stop at 64–128 but \ourmodel remains stable at 128. The divergence is most pronounced on MobileNet: \texttt{CHET}, \texttt{HELayers}, and \texttt{Orion} fail at batch sizes as small as 4–8, while \ourmodel operates reliably at 16–32. Compared with these baselines, \texttt{Hyena+} further extends the feasible batch-size range by re-rotating ciphertexts on the fly, but this memory reduction comes at the cost of substantially higher latency; \texttt{Batchwise+} exhibits the opposite tradeoff, namely low inner-rotation cost but rapidly growing ciphertext count and poor scalability once model depth and polynomial degree increase. This enlarged OOM boundary directly stems from our reduced ciphertext count and significantly improved slot utilization, which together minimize per-image memory consumption and enable efficient parallelism under limited GPU memory. Consequently, \ourmodel maintains high throughput across batch sizes that prior encrypted-inference systems cannot support. 

To analyze the sources of performance gains, Table~\ref{tbl:detail-count-compare} reports amortized metrics per input sample including rotation counts, slot utilization, ciphertext count, plaintext multiplications, key-switching, and NTT/INTT operations. For clarity, we present results in two modes: single-image inference with batch size 1, and high-throughput inference with batch sizes maximized to the GPU memory limits of each benchmark. 
%\mynew{We exclude \texttt{CHET} and \texttt{Hyena} for this experiment. \texttt{CHET} exhibits operational trends and slot utilization patterns similar to \texttt{HElayers}. Furthermore, \texttt{Hyena} does not propose dedicated techniques for improving slot utilization following feature-reduction layers. As its threat model assumes an HE-MPC setting, \texttt{Hyena} relies on decryption and re-encryption to manage slot utilization, which is fundamentally distinct from our pure FHE-based design.}

%\texttt{CHET} suffers from extremely low slot utilization ($<15.6\%$) due to single-channel packing.
% \texttt{HELayers} partially alleviates rotation overhead using blocking, reaching a slot utilization of $31.25\%$.
% \texttt{Orion} improves utilization to $56.25\%$ via multi-channel multiplexing but is limited by channel-dependent packing, especially for large feature maps (e.g., ResNet18 with $224{\times}224$), where a single-channel tensor exceeds ciphertext capacity and prevents further rotation reduction.
While \texttt{HELayers} partially alleviates rotation overhead using spatial blocking-reaching a slot utilization of $31.25\%$, \texttt{Orion} increases this to $56.25\%$ via multi-channel multiplexing. However, \texttt{Orion}'s design remains constrained by channel-dependent packing; for large input feature maps (e.g., ResNet18 with $224 \times 224$), a single-channel tensor can exceed the ciphertext capacity, which restricts further rotation reduction. In contrast, by fully decoupling both spatial and channel dependencies, \ourmodel achieves the lowest rotation count and the highest slot utilization. Specifically, \ourmodel-F achieves a slot utilization of $62.5\%$, while \ourmodel-F+C reaches a perfect $100\%$ utilization and reduces average rotation operations by $88.55\%$. For \texttt{LeNet} and \texttt{VGG5}, \ourmodel-F alone reaches the optimal $100\%$ utilization, yielding a $94\%$ to $98\%$ reduction in rotation count compared to \texttt{Orion}. Furthermore, this reduction in rotation operations (\textbf{Rot}) directly propagates to lower-level FHE primitives. Compared to \texttt{HELayers}, \ourmodel reduces key-switching and NTT/iNTT transformations by up to \textbf{93\%--94\%}. Concurrently, high slot utilization translates to a \textbf{89\%--94\%} reduction in both ciphertext count and homomorphic multiplication operations. These results confirm that \ourmodel synergistically optimizes rotation overhead and ciphertext slot utilization, significantly minimizing the overall FHE workload exposed to the underlying hardware.

\vspace{1mm}
\noindent\textbf{System Efficiency Analysis.} 
We evaluate how our design gains benefit from improved system/hardware efficiency with two representative benchmarks: LeNet and SqueezeNet. To be specific, on LeNet, with batch size 32, Table~{\ref{tbl:hw_compare}} compares the baseline-Orion execution with {\ourmodel} best fragmented configuration (block = 4). The optimized setting reduces (i) NTT kernel invocations from 48,015 to 5,775 (88.0\% reduction), and GPU memory transfer size from 12,021 MB to 1,461 MB (87.8\% reduction), and the number of memory transfer operations from 6,464 to 832 (87.1\% reduction). Overall, the results show consistent optimization benefits across configurations, reducing kernel invocations and memory traffic. A similar trend appears on SqueezeNet. As shown in Table~{\ref{tbl:hw_compare_blocksize}}, increasing the block size from 2 to 4 reduces kernel invocations from 69,136 to 7,984, memory traffic from 15,167 MB to 1,790 MB, and memory transfer operations from 8,728 to 1,084. 
Overall, the experiments show that {\ourmodel}'s packing mechanism is an effective optimization strategy across both design-space exploration and end-to-end evaluation. 
%Accordingly, hardware accelerators should provision resources for the optimal operating region with adaptive block sizing, rather than sizing for worst-case behavior. The 544$\times$ gap in kernel calls between the worst and optimal configurations underscores the difference between practical and impractical deployment. For a detailed discussion regarding the implications for hardware architecture design, please refer to Sec.~{\ref{sec:indication_hw}}.
%\bbnote{Q2-B}

\begin{table}[tbp]
\centering
% \begin{tcolorbox}[
%     colback=blue!10,      % 浅蓝色背景
%     colframe=blue!10,     % 边框设为与背景同色
%     sharp corners,        % 论文通常使用直角
%     boxrule=0pt,          % 去掉边框线条
%     left=5pt, right=5pt, top=2pt, bottom=0pt, % 内部留白，防止文字贴边
%     halign=center         % 内部内容居中
% ]
    \caption{Comparison of system efficiency between Baseline-Orion and $Fenc^2$.}
    \label{tbl:hw_compare} % 在这里设置你的 label

    \resizebox{\columnwidth}{!}{%
        \begin{tabular}{c|c|c|c}
        \hline
        Config         & Kernel Calls & Memory Transfer Size (MB) & No. of Memory Transfers \\ \hline
        Baseline-Orion & 48,015       & 12,021                    & 6,464                   \\ \hline
        $Fenc^2$       & 5,775        & 1,461                     & 832                     \\ \hline
        Reduction      & 88.0\%       & 87.8\%                    & 87.1\%                  \\ \hline
        \end{tabular}%
    }
%\end{tcolorbox}
%\vspace{-10pt}
\end{table}

\begin{table}[tbp]
\centering
% \begin{tcolorbox}[
%     colback=blue!10,      % 浅蓝色背景
%     colframe=blue!10,     % 背景与边框颜色一致
%     sharp corners,        % 直角边框
%     boxrule=0pt,          % 去掉边框线
%     left=5pt, right=5pt, top=0pt, bottom=2pt, % 内部留白
%     halign=center         % 内部内容居中
% ]
    \caption{ Impact of block sizes on system efficiency for SqueezeNet.}
    \label{tbl:hw_compare_blocksize} % Label 占位符

    \resizebox{\columnwidth}{!}{%
        \begin{tabular}{c|c|c|c}
        \hline
        Block     & Kernel Calls & Memory Transfer Size (MB) & No. of Memory Transfers \\ \hline
        4         & 69,136       & 13,000                    & 8,728                   \\ \hline
        8         & 7,984        & 1,534                     & 1,084                   \\ \hline
        Reduction & 88.5\%       & 88.2\%                    & 87.6\%                  \\ \hline
        \end{tabular}%
    }
%\end{tcolorbox}
%\vspace{-15pt}
\end{table}

\vspace{1mm}
\noindent\textbf{Performance evaluation on CPU.}
Table~\ref{tbl:eval-sota-cpu-max} further demonstrates that \ourmodel generalizes effectively to CPU environments. Given the trends are similar to GPU results, we report here only the comparison with the strongest baseline-Orion, using the maximum supported batch size. Owing to the significantly larger memory capacity available on CPU platforms (e.g., 256\,GB), \ourmodel can support much larger batch sizes, leading to even greater throughput benefits compared to \texttt{Orion} (e.g., $226.06\times$ speedup on LeNet and $146.92\times$ on VGG5). While these results highlight the scalability of our approach beyond GPU platforms, \textbf{for consistency, all subsequent experiments are conducted on GPUs}, which are more commonly used for encrypted model inference workloads.

\begin{table}[b]
\scriptsize
\centering
\caption{CPU-side comparison with \texttt{Orion} at maximum supported batch size (256\,GB memory).}
\label{tbl:eval-sota-cpu-max}
\resizebox{.95\columnwidth}{!}{
\begin{tabular}{c|c|c|c|c|c}
\hline
Model &
  Method &
  Mem (GB) &
  Lat (s) &
  Speedup &
  Mem $\downarrow$ (\%) \\ \hline

\multirow{2}{*}{LeNet} 
    & Orion         & 0.21  & 40.87  & -        & -        \\ 
    & \textbf{\ourmodel} & 0.013 & 0.18   & 226.06   & 98.49\%  \\ \hline

\multirow{2}{*}{VGG5}
    & Orion         & 0.18  & 35.26  & -        & -        \\
    & \textbf{\ourmodel} & 0.012 & 0.24   & 146.92   & 73.53\%  \\ \hline

\multirow{2}{*}{SqueezeNet}
    & Orion         & 0.29  & 235.2  & -        & -        \\
    & \textbf{\ourmodel} & 0.016 & 3.98   & 59.10    & 60.62\%  \\ \hline

\multirow{2}{*}{ResNet18}
    & Orion         & 10.3  & 3930   & -        & -        \\
    & \textbf{\ourmodel} & 1.51  & 442    & 8.93     & 87.13\%  \\ \hline

\multirow{2}{*}{MobileNet}
    & Orion         & 8.11  & 3094   & -        & -        \\
    & \textbf{\ourmodel} & 1.21  & 328    & 9.43     & 85.08\% \\ \hline

\end{tabular}
}
\\[-2pt]
    \parbox{\textwidth}{\vspace{1ex}\hspace{5ex}\tiny\textit{Table notes}: Max batch sizes: 1024 (LeNet), 1024 (VGG5), 512 (SqueezeNet), 16 (ResNet18), 16 (MobileNet).}
\vspace{-5pt}
\end{table}

\begin{figure*}
\centering
\includegraphics[width=1\linewidth]{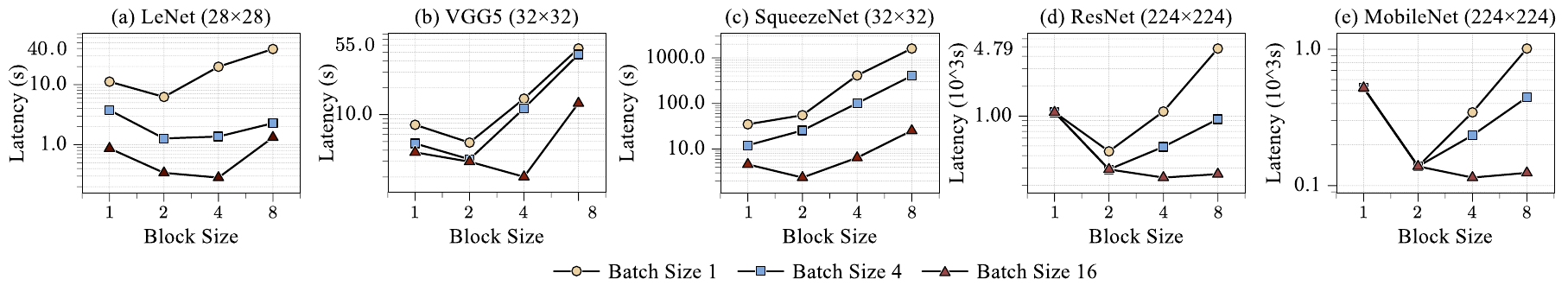}
%\vspace{-12pt}
\caption{Sensitivity analysis of the block size $S$ under different input sizes, model scales and batch sizes.}
\label{fig:lat-mem-with-block-size}
%\vspace{-15pt}
\end{figure*}

\subsection{Applicability of \ourmodel across Diverse CNN Configurations}
\label{sec:detailed-ab-study}

In this section, we investigate how \ourmodel performs in different CNN models considering various block size selections. Fig.~\ref{fig:lat-mem-with-block-size} compares the end-to-end latency under different block sizes across various CNN models and input resolutions. 

\noindent\textbf{Large batch size benefit significantly from larger block sizes.}
Across all five benchmarks, we consistently observe that the benefit of using a larger block size becomes more prominent as the batch size increases. For LeNet and VGG5, the latency gap between small blocks ($S{=}1,2$) and a larger block size ($S{=}4$) widens substantially when batch size grows from 1 to 16, with $S{=}4$ giving the lowest latency at higher batch sizes. A similar pattern appears in SqueezeNet, where larger $S$ provides minimal gains at batch size 1 but yields significant improvements at batch sizes 4 and 16. This trend is even clearer in ResNet18 and MobileNet: because their 224×224 feature maps incur higher inner-rotation cost, the reduction in rotation complexity brought by a larger block size only dominates when enough images are processed in parallel. Overall, larger batch sizes consistently amplify the advantage of larger block sizes, as the reduced inner-rotation overhead can be better amortized across more inputs.

\noindent\textbf{The benefit of larger block sizes grows with feature-map resolution.}
For a fixed batch size, models with larger feature maps (e.g., the
$224{\times}224$ feature maps in ResNet and MobileNet) experience substantially
greater latency reduction as the block size $S$ increases. This is because, for
large $M$, the computation is dominated by \emph{outer rotations} across the
feature map, while the \emph{inner-rotation} cost becomes relatively negligible. 
Thus, increasing $S$ decreases the number of blocks ($m^2 = (M/S)^2$), directly reducing the total rotation overhead. This trend is prominent in both ResNet and MobileNet, where selecting $S \in \{4, 8\}$ consistently yields lower latency than configurations with smaller block sizes. In contrast, models with smaller feature maps (e.g., LeNet and VGG5, operating on $28 \times 28$ or $32 \times 32$ inputs) exhibit more moderate improvements; their rotation overhead is inherently smaller and less sensitive to variations in block count.
% Thus, increasing $S$ reduces the number of blocks
% ($m^2 = (M/S)^2$), directly decreasing the total rotation cost. This trend is clearly observed in both ResNet and MobileNet: selecting
% $S \in \{4,8\}$ consistently yields lower latency than smaller block sizes.  
% In contrast, models with smaller feature maps (e.g., LeNet and VGG5 with
% $28{\times}28$ or $32{\times}32$ inputs) show more moderate improvement, since
% their rotation overhead is inherently smaller and less sensitive to the change
% in block count.

\noindent\textbf{Larger convolution kernels amplify the effect of block-size scaling.}
Models with larger convolution footprints require more inner-rotations, making
them more sensitive to fragment reduction. This trend is clearly reflected in
LeNet ($K{=}5$), where the latency drops more noticeably as $S$ increases
(e.g., between $S{=}1$ and $S{=}4$ at batch size $1$ and $16$), compared to VGG5
($K{=}3$), which exhibits a milder improvement across corresponding block sizes.
The larger kernel introduces more rotation paths per ciphertext, and thus
reducing the number of spatial fragments yields greater savings. Consequently,
block-size scaling provides stronger benefits in models with larger convolution
kernels.

\noindent\textbf{Optimal Block Size Consistently Obtainable Across Benchmarks.}
We observe that for each evaluation case there exists a block size $S$ that minimizes latency, and this optimal point shifts depending on feature-map size, kernel size, and batch size. It is consistent with our theoretical optimal block-size analysis (Section~\ref{sec:optimal}). For lightweight models with small feature maps (e.g., LeNet, VGG5), the improvement from increasing $S$ is marginal. In contrast, models with larger feature maps or larger kernels (e.g., ResNet18, MobileNet, LeNet) show a clear latency reduction as $S$ increases, especially under larger batch settings. Overall, larger block sizes become increasingly advantageous when the computation becomes heavier, and the minimum latency is consistently achieved at a non-trivial $S$ across all benchmarks.

\noindent\textbf{Block size sensitivity to CNN layers.} We also use a series of CNN layers with various input/output channel and kernel combinations to verify our optimal block size selection. Table~\ref{tbl:optimalS} shows our evaluation results with metrics including number of rotation (\#$\textit{Rot}$), amortized memory usage (per image), amortized latency (per image). We select 3 different block sizes for each convolution layer architecture benchmark - Conv($N_{in},N_{out},K$) with corresponding feature map input. To get rid of the impact of ciphertext slot size, we fix the encryption parameter $N=2^{15}$ here. According to our aforementioned analysis (Section~\ref{sec:optimal}), the minimum rotation is achieved when $\frac{K^2}{S^2} =\frac{\alpha \cdot S^2\cdot N_{out}}{N_{in}}$. At these optimal $S^*$, the $\# Rot$ is minimized with the lowest memory usage and latency. For example, Conv(128,32,9) with input feature map (8,128,32,32) achieves the minimum at $S=3.57$ theoretically, thus, in real systems we observe the minimum rotation when block size if set to 4. This observation is aligned with our expectation for the optimal block size $S^*$ and justifies Eq.~\ref{eq:optimal-S}.

\begin{table}[b]
\centering
\caption{Block size v.s. various convolution settings.}
%\vspace{-10pt}
\label{tbl:optimalS}
\resizebox{0.9\columnwidth}{!}{%
\begin{tabular}{c|c|c|c|c}
\hline
\begin{tabular}[c]{@{}c@{}}Layer ($N_{in},N_{out},K$) \\ Input size ($BS,N_{in},H,W$)\end{tabular} 
& \begin{tabular}[c]{@{}c@{}}Block Size\\ $S$ \end{tabular}  
& $\# Rot$ 
& \begin{tabular}[c]{@{}c@{}}Amortized\\ Memory (GB)\end{tabular} 
& \begin{tabular}[c]{@{}c@{}}Amortized\\ Latency (s)\end{tabular} \\ 
\hline

\multirow{3}{*}{\makecell{Conv(64,64,3)\\(4, 64, 32, 32)}}    
& \textbf{1} & 176  & 0.38 & \textbf{0.73} \\ 
& 2          & 288  & 0.63 & 0.78 \\ 
& 4          & 1008 & 2.19 & 1.2 \\ 
\hline

\multirow{3}{*}{\makecell{Conv(64,64,3)\\(8, 64, 32, 32)}}    
& \textbf{1} & 288 & 0.31 & \textbf{0.62} \\ 
& 2          & 320 & 0.34 & 0.72 \\ 
& 4          & 992 & 1.08 & 1.12 \\ 
\hline

\multirow{3}{*}{\makecell{Conv(128,32,9)\\(8, 128, 32, 32)}}    
& 2          & 1648 & 1.79 & 1.89 \\ 
& \textbf{4} & 1008 & 1.09 & \textbf{1.1} \\ 
& 8          & 2224 & 2.42 & 1.34 \\ 
\hline

\multirow{3}{*}{\makecell{Conv(128,32,3)\\(8, 128, 32, 32)}}    
& 1          & 528 & 0.57 & 0.82 \\ 
& \textbf{2} & 304 & 0.33 & \textbf{0.61} \\ 
& 4          & 496 & 0.54 & 0.78 \\ 
\hline

\multirow{3}{*}{\makecell{Conv(32,128,3)\\(8, 32, 32, 32)}}    
& \textbf{1} & 192  & 0.21 & \textbf{0.76} \\ 
& 2          & 496  & 0.54 & 0.86 \\ 
& 4          & 1984 & 2.16 & 0.98 \\ 
\hline

\end{tabular}%
}
% \vskip -0.10in
\end{table}

% \begin{table}[t]
% \caption{Effect of block size $S$ on rotations, memory, and latency.}
% \label{tbl:optimalS}
% \centering
% \resizebox{0.88\columnwidth}{!}{%
% \begin{tabular}{c|c|c|c|c}
% \hline
% Layer / Input & $S$ & \#Rot & Mem (GB) & Lat. (s) \\
% \hline

% \multirow{3}{*}{Conv(64,64,3) / (4,64,32,32)}
% & \textbf{1} & 176  & 0.38 & \textbf{0.73} \\ %\cline{2-5}
% & 2          & 288  & 0.63 & 0.78 \\ %\cline{2-5}
% & 4          & 1008 & 2.19 & 1.20 \\
% \hline

% \multirow{3}{*}{Conv(64,64,3) / (8,64,32,32)}
% & \textbf{1} & 288 & 0.31 & \textbf{0.62} \\ \cline{2-5}
% & 2          & 320 & 0.34 & 0.72 \\ \cline{2-5}
% & 4          & 992 & 1.08 & 1.12 \\
% \hline

% \multirow{3}{*}{Conv(128,32,9) / (8,128,32,32)}
% & 2          & 1648 & 1.79 & 1.89 \\ \cline{2-5}
% & \textbf{4} & 1008 & 1.09 & \textbf{1.10} \\ \cline{2-5}
% & 8          & 2224 & 2.42 & 1.34 \\
% \hline

% \multirow{3}{*}{Conv(128,32,3) / (8,128,32,32)}
% & 1          & 528 & 0.57 & 0.82 \\ \cline{2-5}
% & \textbf{2} & 304 & 0.33 & \textbf{0.61} \\ \cline{2-5}
% & 4          & 496 & 0.54 & 0.78 \\
% \hline

% \multirow{3}{*}{Conv(32,128,3) / (8,32,32,32)}
% & \textbf{1} & 192  & 0.21 & \textbf{0.76} \\ \cline{2-5}
% & 2          & 496  & 0.54 & 0.86 \\ \cline{2-5}
% & 4          & 1984 & 2.16 & 0.98 \\
% \hline

% \end{tabular}}
% \vskip -0.08in
% \end{table}

\noindent\textbf{Efficiency of Architecture-aware Ciphertext Compression (AAC).} We evaluate the effectiveness of our proposed \multi. For this performance evaluation, we adopt two representative convolutional structures as benchmarks: \texttt{residual-shortcuts} from ResNet18 (also used in MobileNet) and the \texttt{fire-modules} from SqueezeNet. These architectures cover a range of design patterns and input dimensions, enabling a comprehensive evaluation of our compression strategy across various convolutional backbones. Table~\ref{tbl:channel_compress_duplication} shows the execution latencies and slot utilization of $4$ fire modules in SqueezeNet and $4$ residual-shorts in ResNet18 on a GPU.
Within each fire module $(N_{\text{in}}, N_{\text{DS}}, N_{\text{out}})$, the input channel $N_{\text{in}}$ is first reduced to a fixed intermediate width $N_{\text{DS}} = 32$, then expanded to half of the output channels via two parallel convolutions, and finally concatenated to form the final output. AAC significantly improves performance by $2.08\times$ by maintaining intermediate cts in a format that maximizes slot utilization. In contrast, non-optimized cts without AAC suffer from increased slot wastage due to dimensional reductions at intermediate stages, leading to degraded slot utilization and higher per-layer latency. The observed speedup ranges from $1.478\times$ to $4.68\times$ across different layers. 

For each residual shortcut module $(N_{\text{in}}, N_{\text{DS}}, N_{\text{out}})$, the input $N_{\text{in}}$ is first projected to $N_{\text{DS}}$ using a $1\times1$ convolution, followed by a $3\times3$ convolution for transformation, and then expanded to $N_{\text{out}}$ by another $1\times1$ convolution. ciphertext-compression yields a total performance gain of $1.636\times$ by maintaining efficient ciphertext packing throughout the intermediate stages. Without ciphertext-compression, the ciphertexts become increasingly underutilized as channel dimensions shrink, resulting in worse slot utilization and increased latency. In this case, the layer-wise speedup ranges from $1.016\times$ to $1.674\times$.
Although the improvement in slot utilization is less pronounced compared to the residual shortcut module, the fire module contains a consecutive $3\times3$ convolutional layer, which amplifies the redundancy caused by wasted slots. As a result, AAC achieves a higher latency improvement in this case.

\noindent\textbf{Repack Overhead Analysis.} We evaluate the latency overhead of additional \textit{rot-mask-add} operations introduced by AAC and repacking to optimize rotation complexity, shown in Figure~{\ref{fig:repack_overhead}}. Specifically, under batch size 1, overhead ranges from 0.42\% (LeNet) to 3.3\% (VGG5), with a similar trend at each benchmark's respective maximum batch size (0.52\%--3.7\%). VGG5 incurs relatively higher overhead due to the more frequent feature map reduction (4/5 layers). While architectures like MobileNet (12/55) and SqueezeNet (2/10) require extensive AAC due to bottleneck modules, the absolute latency contribution of these operations remains marginal. This demonstrates that even for architectures with highly non-uniform layer structures, the overhead of our repacking strategy remains negligible compared to the substantial latency reduction achieved through rotation optimization.
%This demonstrates that even in architectures with highly non-uniform layer structures, the cost of our repacking strategy is strictly subordinate to the massive latency reduction realized through rotation optimization. 
Overall, repacking consistently incurs only minimal overhead across all benchmarks.
%}\bbnote{Q6-C, Q4-B}
%We evaluate the latency overhead of all additional \textit{rot-mask-add} operations, introduced by AAC to enhance slot utilization and repack to enable optimal block-size adjustments across layers, thereby minimizing overall rotation complexity. From the results calibrated on all benchmarks shown in Figure~\ref{fig:repack_overhead}, these operations account for only 0.42\%, 3.3\% 1.2\%, 0.8\%, and 1.8\% of total latency for LeNet, VGG5, SqueezeNet, ResNet18, and MobileNet, respectively, under batch size 1. Even at the maximum batch size, the overhead remains minimal, at 0.52\%, 3.7\%, 1.6\%, 1.1\%, and 2.1\%, demonstrating that repack and AAC performance gains come with negligible additional cost. Notably, VGG5 exhibits the highest relative overhead among our benchmarks, which correlates directly with its architectural characteristics. Specifically, VGG5 undergoes frequent layer shape changes (4 out of 5 layers), significantly more than LeNet (1 out of 5), SqueezeNet (2 out of 10), ResNet18 (4 out of 17), and MobileNet (9 out of 55). (SqueezeNet, MobileNet require more AAC due to more bottleneck modules) While this increased frequency necessitates more frequent repacking, the absolute latency contribution remains marginal. This confirms that even in architectures with non-uniform layer structures, the overhead of our repacking strategy is strictly subordinate to the massive latency reduction achieved through rotation optimization.

\begin{figure}
%\begin{tcolorbox}[colback=blue!10, colframe=blue!30, sharp corners]
    \centering
    \includegraphics[width=1\columnwidth]{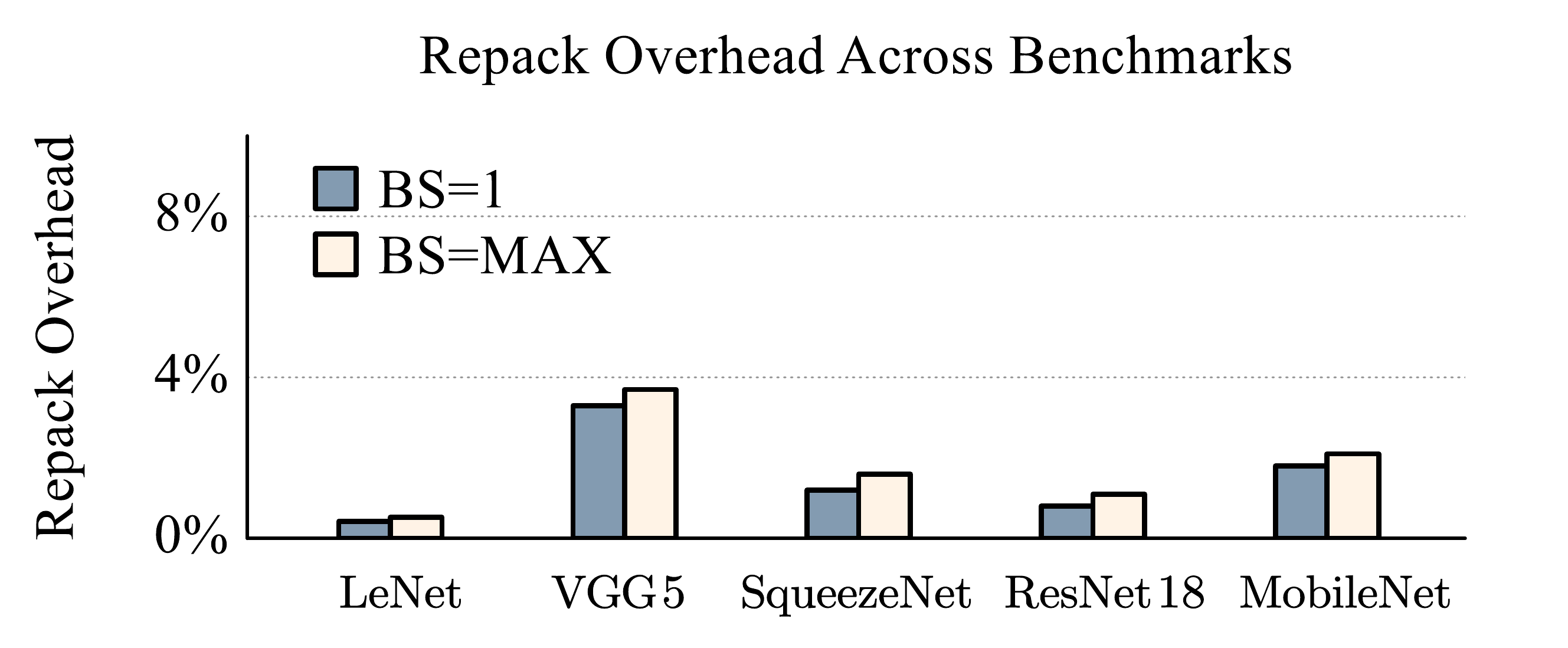}
    %\vspace{-15pt}
   %\caption{End-to-end latency and memory comparison of \ourmodel v.s. prior SOTAs across varying input sizes, batch sizes, and model scales on GPU.}
   \caption{Repack overhead (normalized to total evaluation latency) across various benchmarks. $BS=MAX$ for LeNet, VGG5, SqueezeNet, ResNet18 and MobileNet is 512, 512, 256, 16 and 16, respectively, under 48 GB memory capacity.}
    \label{fig:repack_overhead}
    %\vspace{-5pt}
%\end{tcolorbox}
%\vspace{-15pt}
\end{figure}

 %Overall, their computational overhead is negligible. 

%For SqueezeNet, ResNet18, and MobileNet, we apply AAC to enhance slot utilization, which introduces additional \textit{rot–mask–add} operations or
%However, \textbf{their computational overhead is negligible}. Without generality, for instance, under batch size~1, these operations contribute only $1.2\%$, $0.8\%$, and $1.8\%$ of the total latency for SqueezeNet, ResNet18, and MobileNet, respectively. Even at the maximum batch size, the impact remains similarly low ($1.6\%$, $1.1\%$, and $2.1\%$).

\begin{figure}[b!]
    \centering
    \includegraphics[width=1\linewidth]{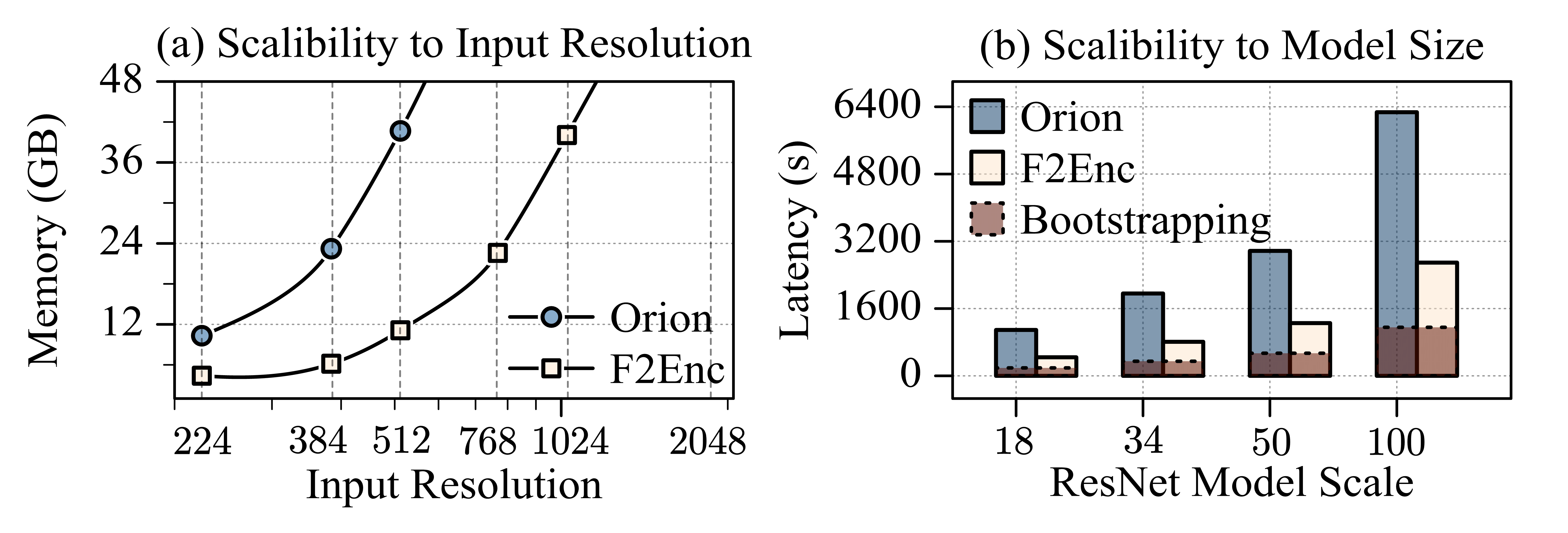}
    %\vskip -0.10in
    \caption{Scalability comparison against \emph{Orion} in terms of higher-resolution input and larger CNN model.}
    \label{fig:scalability-to-model-size-and-input-resolution}
    %\vspace{-5pt}
\end{figure}

\vspace{1mm}
\noindent\textbf{Scalability Analysis.}
As Fig.~\ref{fig:scalability-to-model-size-and-input-resolution} shows, we evaluate the scalability of our method with respect to input resolution and CNN model depth.
When increasing the input size from $224\times224$ to $2048\times2048$, the memory required by \texttt{Orion} grows steeply, whereas \ourmodel exhibits much slower growth thanks to \frag with less amortized rotation cost. Similarly, when scaling from ResNet-18 to ResNet-110, \ourmodel consistently reduces latency with similar performance gap regardless of model depth. Overall, \ourmodel \emph{scales more gracefully to high-resolution inputs and larger CNN models.}

\begin{table}[t]
\scriptsize
\caption{Performance of Arch-aware Ct Compression}
\centering
\label{tbl:channel_compress_duplication}
\resizebox{\columnwidth}{!}{%
\begin{tabular}{l|cc|cc|c}
\hline
\multirow{2}{*}{\begin{tabular}[c]{@{}l@{}}Layer\\ ($N_{in}, N_{DS}, N_{out}$)\end{tabular}} &
  \multicolumn{2}{c|}{without AAC} &
  \multicolumn{2}{c|}{with AAC} &
  \multirow{2}{*}{\begin{tabular}[c]{@{}c@{}}Latency\\ Speedup ($\times$)\end{tabular}} \\ \cline{2-5}
 &
  latency(s) &
  \begin{tabular}[c]{@{}c@{}}slot utilization\end{tabular} &
  latency(s) &
  \begin{tabular}[c]{@{}c@{}}slot utilization\end{tabular} &
   \\ \hline
\begin{tabular}[c]{@{}l@{}}Fire-module\\ (64,32,128)\end{tabular}       & 8.84              & 0.5              & 6.00                & 1              & 1.47          \\ \hline
\begin{tabular}[c]{@{}l@{}}Fire-module\\ (128,32,128)\end{tabular}      & 12.75             & 0.25             & 6.65                & 1              & 1.92          \\ \hline
\begin{tabular}[c]{@{}l@{}}Fire-module\\ (128,32,256)\end{tabular}      & 8.69              & 0.25             & 4.64                & 1              & 1.95          \\ \hline
\begin{tabular}[c]{@{}l@{}}Fire-module\\ (256,32,256)\end{tabular}      & 10.40             & 0.063           & 2.22                & 0.5            & 4.68          \\ \hline
\textbf{Total}                                                          & \multicolumn{2}{c|}{\textbf{40.67}}  & \multicolumn{2}{c|}{\textbf{19.51}}  & \textbf{2.09} \\ \hline
\begin{tabular}[c]{@{}l@{}}Residual-shortcut\\ (64,8,128)\end{tabular}  & 2.35             & 1                & 2.38               &  1              & 1.016          \\ \hline
\begin{tabular}[c]{@{}l@{}}Residual-shortcut\\ (128,8,256)\end{tabular} & 11.87             & 0.25                 & 6.92               &  1              & 1.72          \\ \hline
\begin{tabular}[c]{@{}l@{}}Residual-shortcut\\ (256,8,512)\end{tabular} & 4.12             & 0.06                 & 2.35               &  1              & 1.75          \\ \hline
\begin{tabular}[c]{@{}l@{}}Residual-shortcut\\ (512,8,512)\end{tabular} & 32.17            & 0.02                 & 19.21              &  1             & 1.67          \\ \hline
\textbf{Total}                                                          & \multicolumn{2}{c|}{\textbf{50.50}} & \multicolumn{2}{c|}{\textbf{30.87}} & \textbf{1.64} \\ \hline
\end{tabular}%
}
% \vspace{-10pt}
\end{table}

\section{Discussions}
% \vspace{-2pt}
%\subsection{Applicability of \ourmodel  for Transformers}
\subsection{Extensibility of \ourmodel to Transformers}
The core principle of our \ourmodel extends naturally beyond CNNs to Transformer-based models. In our CNN design, we reduce intra-ciphertext data dependence by splitting feature maps across multiple ciphertexts, thereby decreasing the amount of costly rotation needed within each ciphertext. We apply the same idea to Transformers \emph{by splitting each token embedding across ciphertexts}. Specifically, given $n$ input tokens with embedding dimension $E$, we partition each embedding into sub-blocks of size $S$, producing $E/S$ fragments. The input tensor is thus reorganized from $(n \times E)$ into $(n \times \frac{E}{S} \times S)$, and fragments from different tokens are interleaved across ciphertexts. This fragmented layout reduces embedding-wise intra-ciphertext dependence in the same way that our CNN layout reduces feature-map-wise dependence. As a result, the number of rotations required for embedding aggregation is reduced from $E-1$ in full-embedding packing to $\frac{E}{S}-1$ under fragmented encoding. Importantly, this extension preserves the same trade-off as in the CNN case. The block size $S$ controls the trade-off between inner-rotation and outer-rotation overhead: a smaller $S$ retains more computation within each ciphertext and requires more inner rotations, while a larger $S$ reduces inner-rotation cost at the expense of increased outer-rotation overhead across ciphertexts. This trade-off also appears in Transformer components such as Feed-Forward Network (FFN) and Multi-Head Attention (MHA), where embedding-wise and token-wise rotations are similarly controlled by $S$.

While fragmentation optimizes FFN and projection layers, it introduces a unique trade-off during the MHA stage. Because $S$ times more tokens are packed into each ciphertext to maintain high density, token-wise aggregation (required for $QK^\intercal$ and $AV$ computations) incurs a proportional increase in rotations. The total rotation complexity across both FFN and MHA layers can be formulated as $\mathcal{O}\left( \frac{\gamma E}{S} + S \cdot 2n \right)$, where $\gamma$ is the magnification factor of the FFN layer relative to the embedding size. This reveals a critical optimization space for selecting the block size $S$, analogous to the design-space exploration required for CNNs. 
%Realizing an end-to-end Transformer evaluation requires significant engineering effort beyond linear layer optimization. 

%\mynew{
We implement and evaluate the inference latency corresponding to one encoder block (including MHA and FFN) in the BERT-base model ($E=768, n=128, \gamma \approx 6$). Specifically, to perform the $QK^\intercal V$ computation, we modify the general HE-matrix transposition procedure with specialized token-block rotations to align data between the MHA and FFN stage transition. We additionally implement an adapted Orion packing strategy as the baseline for comparison. 
%Our complexity equation derives an optimal block size of $S=4$. 
We validated this on an NVIDIA A6000 GPU using encryption parameters ($N=2^{16}, \log Q=1768$). As shown in Table~{\ref{tbl:transformer_compare}}, the total latency reaches its minimum at $S=4$, which demonstrates 6.74$\times$ speedup over Orion, confirming that a balanced block size effectively minimizes the combined overhead of embedding-wise and token-wise rotations. Note that evaluating Transformer models end-to-end requires integrating non-linear components such as Softmax and LayerNorm. However, preserving model accuracy when supporting these components under in HE inference remains challenging and is an active research problem{~\cite{zhang2024secure}}. Such an effort is beyond the scope of this work.
\vspace{-5pt}
\subsection{Implications for Hardware Architecture Design}
\label{sec:indication_hw}

%\mynew{
By minimizing homomorphic rotations, our design provides critical insights for FHE hardware specialization. In Ring-LWE-based schemes like CKKS, rotations are the dominant consumers of \textit{NTT/iNTT units}, as the underlying key-switching (or automorphism) procedure requires multiple forward and inverse transforms to transition between decomposition bases. In contrast, other operations, such as ciphertext-plaintext multiplications, primarily involve element-wise polynomial multiplications which are less computationally intensive.

%\mynew{
The drastic reduction in rotation counts through our \ourmodel directly translates to a \emph{lower demand} on the hardware's NTT/iNTT execution units. This shifts the architectural requirements in two significant ways: {\ding{202}} Area and Power Efficiency: Rather than over-provisioning massive NTT clusters to handle traditional rotation-heavy workloads, hardware designers can reduce the number of physical NTT/iNTT engines. This allows for a smaller silicon footprint and significantly lower power consumption without compromising end-to-end throughput; {\ding{203}} Alleviating Memory Pressure: Each (i)NTT invocation necessitates extensive memory traffic to fetch twiddle factors and large ciphertext coefficients. By pruning these invocations, our approach alleviates the memory bandwidth bottleneck that frequently stalls FHE accelerators.

% \begin{itemize}
%     \item \mynew{\textbf{Area and Power Efficiency:} Rather than over-provisioning massive NTT clusters to handle traditional rotation-heavy workloads, hardware designers can reduce the number of physical NTT/iNTT engines. This allows for a smaller silicon footprint and significantly lower power consumption without compromising end-to-end throughput.}
    
%     \item \mynew{\textbf{Alleviating Memory Pressure:} Each (i)NTT invocation necessitates extensive memory traffic to fetch twiddle factors and large ciphertext coefficients. By pruning these invocations, our approach alleviates the memory bandwidth bottleneck that frequently stalls FHE accelerators.}
% \end{itemize}

%\mynew{
The observations from \ourmodel motivates a shift toward hardware accelerators that are provisioned for the \textit{optimal operating region} (e.g., adaptive block sizing) rather than worst-case behavior. The massive gap in kernel calls between traditional and optimized configurations underscores that future FHE hardware should be tailored to exploit such algorithmic optimizations to achieve practical deployment.
%}\bbnote{Q8-C}

% Please add the following required packages to your document preamble:
% \usepackage[normalem]{ulem}
% \useunder{\uline}{\ul}{}

% \begin{table}[htbp]
% \begin{tcolorbox}[
%     colback=blue!10, 
%     colframe=none, 
%     sharp corners, 
%     boxrule=0pt,
%     left=5pt, right=5pt, top=5pt, bottom=5pt]
% \centering
% \label{tbl:transformer_compare}
% \resizebox{0.9\columnwidth}{!}{
% \begin{tabular}{l|l|l}
% \hline
% Modules                             & Orion    & \ourmodel \\ \hline
% $X\times W_{Q,K,V}$                 & 87.85    & 43.35                   \\ \hline
% $A=Q\times K^{\intercal}$           & 170.0898 & 124.9648                \\ \hline
% $O=A\times V$                       & 32.3868  & 26.347                  \\ \hline
% FFN                                 & 232.715  & 115.8247                \\ \hline
% MatMul latency/Encoder Block        & 523.0416 & 310.4865                \\ \hline
% \end{tabular}
% }
% \end{tcolorbox}
% \end{table}

\begin{table}[tbp]
\centering
\caption{Latency Comparison of Transformer Encoder Block.} 
\label{tbl:transformer_compare}
%\vspace{-10pt}
% \begin{tcolorbox}[
%     enhanced, 
%     colback=blue!10, 
%     frame hidden,
%     sharp corners, 
%     left=5pt, right=5pt, top=2pt, bottom=2pt,
%     hbox,
%     center,
%     nobeforeafter
% ]
% 将列格式从 l|l|l 改为 c|c|c 实现居中
\begin{tabular}{c|c|c|c}
\hline
Modules                             & Orion    & \ourmodel & Speedup  \\ \hline
$X\times W_{Q,K,V}$                 & 87.85    & 43.35   & 2.03$\times$\\ \hline
$A=Q\times K^{\intercal}$           & 1738.17  & 124.96  & 13.91$\times$\\ \hline
$O=A\times V$                       & 32.39    & 26.34   & 1.23$\times$\\ \hline
FFN                                 & 232.71   & 115.82  & 2.01$\times$\\ \hline
MatMul latency/Encoder Block        & 2091.12  & 310.48  & 6.74$\times$\\ \hline
\end{tabular}
%\end{tcolorbox}
%\vspace{-15pt}
\end{table}

\vspace{-3pt}
\section{Related Work}
\vspace{-2pt}
In addition to works discussed in \sectionautorefname~\ref{sec:existing-ciphertext-packing-optim}, orthogonal approaches have explored HE-based inference from different angles. 
At the implementation level, Cheetah~\cite{huang2022cheetah} introduces \textbf{coefficient encoding} using BFV to eliminate rotations via polynomial dot-products; later works such as Hyena~\cite{singh2024hyena}, Bumblebee~\cite{lu2023bumblebee}, and Iron~\cite{hao2022iron} further improve throughput via denser encoding. However, this technique is BFV-specific and requires intermediate data re-encryption, which violates our threat model. NeuJeans~\cite{ju2024neujeans} maps convolution to coefficient-wise dot-sums in CKKS, but its dependence on bootstrapping incurs unnecessary overhead.

At \textbf{compiler level}, HEMET~\cite{lou2021hemet} uses NAS to replace convolutions with HE-friendly operations, targeting mobile models~\cite{iandola2016squeezenet, szegedy2017inception}. Though effective in narrow settings, it lacks generality for broader HE CNNs. In the meanwhile, \textbf{level management} techniques are proposed at compiler level. EVA~\cite{dathathri2020eva} and TenSEAL~\cite{benaissa2021tenseal} help manage noise growth and parameter tuning. BitPacker~\cite{samardzic2024bitpacker} and Reserve~\cite{lee2024performance} optimize level consumption by compressing intermediate results and dynamically reserving levels. Hecate~\cite{lee2022hecate} uses runtime profiling for automatic level budgeting. These works reduce bootstrapping overhead but do not address data packing efficiency. In system-level, various works improved FHE performance through low-level primitive 
acceleration, e.g. NTT/INTT, bootstrapping, and keyswitching, including GPU-based solutions~\cite{fan_tensorfhe_2023, jiao_neo_2025, huang_effact_2025, fan_warpdrive_2025, kim_cheddar_2024, wang_he-booster_2023,gong2026aegis}, and ASIC/FPGA accelerators~\cite{kim_bts_2022, fan_fastfhe_2025, zhou_ufc_2024, kim_anaheim_2025, jayashankar_cinnamon_2025,yang_hydra_2025,samardzic_f1_2021,yang_ola_2025}. 

%\mynew{
Beyond neural network-specific optimizations, recent research in Private Information Retrieval (PIR), such as Pantheon\mbox{~\cite{pantheon2023pantheon}} and Coeus\mbox{~\cite{ahmad2021coeus}}, focuses on a two-step workload consisting of secure matrix-vector scoring followed by parallel oblivious retrieval. Unlike the deep, sequential layers of NNs where circuit depth grows cumulatively, PIR is characterized by a ``wide and shallow" structure where depth consumption is dominated by the initial equality-check protocol. Consequently, PIR represents a bootstrapping-intensive but linear-computation-minor scenario, whereas our work addresses the rotation and memory bottlenecks inherent in deep, sequential models. These approaches are orthogonal; while PIR-specific optimizations like Pantheon's multi-threading focus on parallel execution in CPU, our packing reduces the fundamental rotation complexity per ciphertext, providing a complementary layer of algorithmic efficiency for diverse FHE tasks.
%}\bbnote{Q5-C}

%\vspace{-3pt}
\section{Conclusion}
%\vspace{-2pt}
%\textcolor{red}{TODO: REVISE}
\ourmodel demonstrates that application-level data layout is a first-order architectural optimization for encrypted inference. By jointly optimizing slot utilization, fragment structure, and ciphertext density, \ourmodel dramatically reduces homomorphic operation counts across all convolutional layers. These reductions decrease memory pressure, and improve hardware utilization, yielding up to \textbf{$228.83\times$} latency speedup over \texttt{Orion} on GPU and up to \textbf{$226.06\times$} on CPU for LeNet (MNIST). For MobileNet (ImageNet), \ourmodel achieves up to \textbf{$4.55\times$} latency speedup over \texttt{Orion} on GPU and up to \textbf{$9.43\times$} on CPU, with full memory capacity utilized. Unlike accelerator proposals that target single primitives such as NTT or bootstrapping, \ourmodel reshapes the workload itself, producing benefits that are orthogonal and often significantly larger. \ourmodel is hardware-agnostic yet architecturally transformative, reducing HE workload on hardware by exploiting application-level structure.
%rebalances computation away from memory-bound rotations toward compute-bound NTTs. 
These results highlight a new direction for FHE system design: algorithm-architecture co-design driven by encrypted tensor layout. We believe \ourmodel provides the foundation for next-generation HE accelerators and privacy-preserving ML systems.

\section*{Acknowledgments}
We thank the anonymous reviewers and our shepherd for their constructive feedback and suggestions on this work. This work is partially supported by the National Science Foundation (NSF) under Grants No.~CNS-2348733, No.~CNS-2349538, and No.~CNS-2340777.
%%%%%%% -- PAPER CONTENT ENDS -- %%%%%%%%

%%%%%%%%% -- BIB STYLE AND FILE -- %%%%%%%%
\bibliographystyle{IEEEtranS}
\bibliography{refs}
%%%%%%%%%%%%%%%%%%%%%%%%%%%%%%%%%%%%

\end{document}